\algrenewcommand\algorithmicrequire{\textbf{Input:}}
\algrenewcommand\algorithmicensure{\textbf{Output:}}
\algrenewcommand{\algorithmiccomment}[1]{\hspace{\fill}\{#1\}}
\newtheorem{lemma}{Lemma}
\newproof{proof}{Proof}
\begin{document}

\begin{frontmatter}

\title{Exploiting variable associations to configure efficient local search algorithms in large-scale binary integer programs\tnoteref{preliminary}}
\tnotetext[preliminary]{A preliminary version of this paper was presented in \citep{UmetaniS2015}.}

\author{Shunji Umetani\corref{cor1}}
\address{Graduate School of Information Science and Technology, Osaka University,\\ 1-5 Yamadaoka, Suita, Osaka, 565-0871, Japan}
\ead{umetani@ist.osaka-u.ac.jp}
\cortext[cor1]{Corresponding author. Tel.: +81 (6) 6879 7799.}

\begin{abstract}
We present a data mining approach for reducing the search space of local search algorithms in a class of binary integer programs including the set covering and partitioning problems.
The quality of locally optimal solutions typically improves if a larger neighborhood is used, while the computation time of searching the neighborhood increases exponentially.
To overcome this, we extract variable associations from the instance to be solved in order to identify promising pairs of flipping variables in the neighborhood search.
Based on this, we develop a 4-flip neighborhood local search algorithm that incorporates an efficient incremental evaluation of solutions and an adaptive control of penalty weights.
Computational results show that the proposed method improves the performance of the local search algorithm for large-scale set covering and partitioning problems.
\end{abstract}

\begin{keyword}
combinatorial optimization \sep heuristics \sep set covering problem \sep set partitioning problem \sep local search
\end{keyword}
\end{frontmatter}


\section{Introduction\label{sec:intro}}
The Set Covering Problem (SCP) and Set Partitioning Problem (SPP) are representative combinatorial optimization problems that have many real-world applications, such as crew scheduling \citep{BarnhartC1998,HoffmanKL1993,MingozziA1999}, vehicle routing \citep{AgarwalY1989,BaldacciR2008,BramelJ1997,HashimotoH2009}, facility location \citep{BorosE2005,FarahaniRZ2012} and logical analysis of data \citep{BorosE2000,HammerPL2006}.
Real-world applications of SCP and SPP are comprehensively reviewed in \citep{CeriaS1997} and \citep{BalasE1976}, respectively.

Given a ground set of $m$ elements $i \in M = \{ 1, \dots, m \}$, $n$ subsets $S_j \subseteq M$ ($|S_j| \ge 1$), and their costs $c_j \in \mathbb{R}$ ($\mathbb{R}$ is the set of real values) for $j \in N = \{ 1, \dots, n \}$, we say that $X \subseteq N$ is a cover of $M$ if $\bigcup_{j \in X} S_j = M$ holds.
We also say that $X \subseteq N$ is a partition of $M$ if $\bigcup_{j \in X} S_j = M$ and $S_{j_1} \cap S_{j_2} = \emptyset$ for all $j_1, j_2 \in X$ hold.
The goals of SCP and SPP are to find a minimum cost cover and partition $X$ of $M$, respectively.
In this paper, we consider the following class of Binary Integer Programs (BIPs) including SCP and SPP:
\begin{equation}
\label{eq:bip}
\begin{array}{lll}
\textnormal{minimize} & \displaystyle\sum_{j \in N} c_j x_j & \\
\textnormal{subject to} & \displaystyle\sum_{j \in N} a_{ij} x_j \le b_i, & i \in M_L,\\
 & \displaystyle\sum_{j \in N} a_{ij} x_j \ge b_i, & i \in M_G,\\
 & \displaystyle\sum_{j \in N} a_{ij} x_j = b_i, & i \in M_E,\\
& x_j \in \{ 0, 1 \}, & j \in N,
\end{array}
\end{equation}
where $a_{ij} \in \{ 0, 1 \}$ and $b_i \in \mathbb{Z}_+$ ($\mathbb{Z}_+$ is the set of nonnegative integer values).
We note that $a_{ij} = 1$ if $i \in S_j$ holds and $a_{ij} = 0$ otherwise, and $x_j = 1$ if $j \in X$ holds and $x_j = 0$ otherwise.
That is, a column vector $\bm{a}_j = (a_{1j}, \dots, a_{mj})^\top$ of the matrix $(a_{ij})$ represents the corresponding subset $S_j = \{ i \in M \mid a_{ij} = 1 \}$, and the vector $\bm{x}$ also represents the corresponding cover (or partition) $X = \{ j \in N \mid x_j = 1 \}$.
For notational convenience, we denote $M = M_L \cup M_G \cup M_E$.
For each $i \in M$, let $N_i = \{ j \in N \mid a_{ij} = 1 \}$ be the index set of subsets $S_j$ that contain the elements $i$, and let $s_i(\bm{x}) = \sum_{j \in N} a_{ij} x_j$ be the left-hand side of the $i$th constraint.

Continuous development of mathematical programming has much improved the performance of exact and heuristic algorithms and this has been accompanied by advances in computing machinery.
Many efficient exact and heuristic algorithms for large-scale SCP and SPP instances have been developed \citep{AtamturkA1995,BarahonaF2000,BastertO2010,BorndorferR1998,BoschettiMA2008,CapraraA1999,CapraraA2000,CasertaM2007,CeriaS1998,LinderothJT2001,UmetaniS2007,WedelinD1995,YagiuraM2006}, many of which are based on a variant of the \emph{column generation method} called the \emph{pricing method} that reduces the number of variables to be considered in the search by using Linear Programming (LP) and/or Lagrangian relaxation.
However, many large-scale SCP and SPP instances still remain unsolved because there is little hope of closing the large gap between the lower and upper bounds of the optimal values.
In particular, the equality constraints of SPP often make the pricing method less effective because they often prevent solutions from containing highly evaluated variables together.
In this paper, we consider an alternative approach for extracting useful features from the instance to be solved with the aim of reducing the search space of local search algorithms for large-scale SCP and SPP instances.

In the design of local search algorithms for combinatorial optimization problems, the quality of locally optimal solutions typically improves if a larger neighborhood is used.
However, the computation time of searching the neighborhood also increases exponentially.
To overcome this, extensive research has investigated ways to efficiently implement neighborhood search, which can be broadly classified into three types: (i)~reducing the number of candidates in the neighborhood \citep{PesantG1999,ShawP2002,YagiuraM1999,YagiuraM2006}, (ii)~evaluating solutions by incremental computation \citep{MichelL2000,YagiuraM1999,VanHentenryckP2005,VoudourisC2001}, and (iii)~reducing the number of variables to be considered by using LP and/or Lagrangian relaxation \citep{CapraraA1999,CeriaS1998,UmetaniS2013,YagiuraM2006}.

To suggest an alternative, we develop a data mining approach for reducing the search space of local search algorithms.
That is, we construct a $k$-nearest neighbor graph by extracting variable associations from the instance to be solved in order to identify promising pairs of flipping variables in the neighborhood search.
We also develop a 4-flip neighborhood local search algorithm that flips four variables alternately along 4-paths or 4-cycles in the $k$-nearest neighbor graph.
We incorporate an efficient incremental evaluation of solutions and an adaptive control of penalty weights into the 4-flip neighborhood local search algorithm.

\section{2-flip neighborhood local search\label{sec:local-search}}
Local Search (LS) starts from an initial solution $\bm{x}$ and then iteratively replaces $\bm{x}$ with a better solution $\bm{x}^{\prime}$ in the neighborhood $\textnormal{NB}(\bm{x})$ until no better solution is found in $\textnormal{NB}(\bm{x})$.
For some positive integer $r$, let the $r$-flip neighborhood $\textnormal{NB}_r(\bm{x})$ be the set of solutions obtainable by flipping at most $r$ variables in $\bm{x}$.
We first develop a 2-Flip Neighborhood Local Search (2-FNLS) algorithm as a basic component of our algorithm.
In order to improve efficiency, the 2-FNLS first searches $\textnormal{NB}_1(\bm{x})$, and then searches $\textnormal{NB}_2(\bm{x}) \setminus \textnormal{NB}_1(\bm{x})$ only if $\bm{x}$ is locally optimal with respect to $\textnormal{NB}_1(\bm{x})$.

The BIP is NP-hard, and the (supposedly) simpler problem of judging the existence of a feasible solution is NP-complete.
We accordingly consider the following formulation of the BIP that allows violations of the constraints and introduce the following penalized objective function with penalty weights $w_i^+ \in \mathbb{R}_+$ ($\mathbb{R}_+$ is the set of nonnegative real values) for $i \in M_L \cup M_E$ and $w_i^- \in \mathbb{R}_+$ for $i \in M_G \cup M_E$.
\begin{equation}
\label{eq:soft-bip}
\begin{array}{lll}
\textnormal{minimize} & \multicolumn{2}{l}{z(\bm{x}) = \displaystyle\sum_{j \in N} c_j x_j + \sum_{i \in M_L \cup M_E} w_i^+ y_i^+ + \sum_{i \in M_G \cup M_E} w_i^- y_i^-}\\
\textnormal{subject to} & \displaystyle\sum_{j \in N} a_{ij} x_j - y_i^+ \le b_i, & i \in M_L,\\
 & \displaystyle\sum_{j \in N} a_{ij} x_j + y_i^- \ge b_i, & i \in M_G,\\
 & \displaystyle\sum_{j \in N} a_{ij} x_j - y_i^+ + y_i^- = b_i, & i \in M_E,\\
 & x_j \in \{ 0, 1 \}, & j \in N,\\
 & y_i^+ \ge 0, & i \in M_L \cup M_E,\\
 & y_i^- \ge 0, & i \in M_G \cup M_E.\\
\end{array}
\end{equation}
For a given $\bm{x} \in \{ 0, 1 \}^n$, we can easily compute optimal $y_i^+ = | s_i(\bm{x}) - b_i |_+$ and $y_i^- = | b_i - s_i(\bm{x}) |_+$, where we denote $| x |_+ = \max \{ x , 0 \}$.

Because the region searched by a single application of LS is limited, LS is usually applied many times.
When a locally optimal solution is found, a standard strategy is to update the penalty weights and to resume LS from the obtained locally optimal solution.
We accordingly evaluate solutions by using an alternative function $\tilde{z}(\bm{x})$, where the original penalty weight vectors $\bm{w}^+$ and $\bm{w}^-$ are replaced by $\widetilde{\bm{w}}^+$ and $\widetilde{\bm{w}}^-$, respectively, and these are adaptively controlled during the search (see Section~\ref{sec:weight-control} for details).

We first describe how 2-FNLS is used to search $\textnormal{NB}_1(\bm{x})$, which is called the 1-flip neighborhood.
Let
\begin{equation}
\Delta \tilde{z}_j(\bm{x}) = \left\{
\begin{array}{ll}
\Delta \tilde{z}_j^{\uparrow}(\bm{x}) & j \in N \setminus X\\
\Delta \tilde{z}_j^{\downarrow}(\bm{x}) & j \in X,
\end{array}
\right.
\end{equation}
be the increase in $\tilde{z}(\bm{x})$ due to flipping $x_j \gets 1 - x_j$, where
\begin{equation}
\label{eq:1-flip}
\begin{array}{ll}
\Delta \tilde{z}_j^{\uparrow}(\bm{x}) = c_j + \displaystyle\sum_{i \in S_j \cap (M_L \cup M_E) \cap \{ l \mid s_l(\bm{x}) \ge b_l\}} \widetilde{w}_i^+ - \sum_{i \in S_j \cap (M_G \cup M_E) \cap \{ l \mid s_l(\bm{x}) < b_l\}} \widetilde{w}_i^-, & \\
\Delta \tilde{z}_j^{\downarrow}(\bm{x}) = - c_j - \displaystyle\sum_{i \in S_j \cap (M_L \cup M_E) \cap \{ l \mid s_l(\bm{x}) > b_l \} } \widetilde{w}_i^+ + \sum_{i \in S_j \cap (M_G \cup M_E) \cap \{ l \mid s_l(\bm{x}) \le b_l \} } \widetilde{w}_i^-,
\end{array}
\end{equation}
are the increases in $\tilde{z}(\bm{x})$ due to flipping $x_j = 0 \to 1$ and $x_j = 1 \to 0$, respectively.
2-FNLS first searches for an improved solution obtainable by flipping $x_j \gets 1 - x_j$ for $j \in N$.
If an improved solution exists, it chooses $j$ with the minimum value of $\Delta \tilde{z}_j(\bm{x})$ and flips $x_j \gets 1 - x_j$.

We next describe how 2-FNLS is used to search $\textnormal{NB}_2(\bm{x}) \setminus \textnormal{NB}_1(\bm{x})$, which is called the 2-flip neighborhood.
We derive conditions that reduce the number of candidates in $\textnormal{NB}_2(\bm{x}) \setminus \textnormal{NB}_1(\bm{x})$ without sacrificing the solution quality by expanding the results as shown in \citep{YagiuraM2006}.
Let $\Delta \tilde{z}_{j_1,j_2}(\bm{x})$ be the increase in $\tilde{z}(\bm{x})$ due to simultaneously flipping the values of $x_{j_1}$ and $x_{j_2}$.
\begin{lemma}
\label{lem:nb1}
If a solution $\bm{x}$ is locally optimal with respect to $\textnormal{NB}_1(\bm{x})$, then $\Delta \tilde{z}_{j_1,j_2}(\bm{x}) < 0$ holds only if $S_{j_1} \cap S_{j_2} \not= \emptyset$ and $x_{j_1} \not= x_{j_2}$.
\end{lemma}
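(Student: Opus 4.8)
\emph{Proof plan.}
I would argue by contraposition: assuming $\bm x$ is locally optimal with respect to $\textnormal{NB}_1(\bm x)$ --- which for the evaluation function $\tilde z$ means exactly that $\Delta\tilde z_j(\bm x)\ge 0$ for every $j\in N$ --- I would show that whenever $S_{j_1}\cap S_{j_2}=\emptyset$ \emph{or} $x_{j_1}=x_{j_2}$, we have $\Delta\tilde z_{j_1,j_2}(\bm x)\ge 0$. The starting point is the closed form obtained after substituting the optimal $y_i^+,y_i^-$, namely $\tilde z(\bm x)=\sum_{j\in N}c_j x_j+\sum_{i\in M}p_i(s_i(\bm x))$, where $p_i(t)=\widetilde w_i^+|t-b_i|_+$ for $i\in M_L$, $p_i(t)=\widetilde w_i^-|b_i-t|_+$ for $i\in M_G$, and $p_i(t)=\widetilde w_i^+|t-b_i|_+ + \widetilde w_i^-|b_i-t|_+$ for $i\in M_E$. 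Since $\widetilde w_i^+,\widetilde w_i^-\ge 0$, each $p_i$ is a convex piecewise-linear function, and because $a_{ij},x_j\in\{0,1\}$ and $b_i\in\mathbb Z_+$ we only ever evaluate the $p_i$ at integer points.

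\emph{Case $S_{j_1}\cap S_{j_2}=\emptyset$.} Flipping $x_{j_1}$ changes $s_i(\bm x)$ only for $i\in S_{j_1}$, flipping $x_{j_2}$ changes $s_i(\bm x)$ only for $i\in S_{j_2}$, and these index sets are disjoint; hence the penalty part of $\Delta\tilde z_{j_1,j_2}(\bm x)$ is the sum of the penalty parts of the two single-variable flips, and the cost part is the sum $(\pm c_{j_1})+(\pm c_{j_2})$ with the same signs as in the single flips. Therefore $\Delta\tilde z_{j_1,j_2}(\bm x)=\Delta\tilde z_{j_1}(\bm x)+\Delta\tilde z_{j_2}(\bm x)\ge 0$ by the local optimality of $\bm x$.

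\emph{Case $x_{j_1}=x_{j_2}$.} Both variables are flipped in the same direction $\sigma$ ($\sigma=+1$ if $x_{j_1}=x_{j_2}=0$, $\sigma=-1$ if $x_{j_1}=x_{j_2}=1$). I would compare $\Delta\tilde z_{j_1,j_2}(\bm x)$ with $\Delta\tilde z_{j_1}(\bm x)+\Delta\tilde z_{j_2}(\bm x)$ term by term. The cost contributions agree (each of $c_{j_1},c_{j_2}$ occurs once on each side, with sign $\sigma$). For a constraint $i\notin S_{j_1}\cap S_{j_2}$, at most one of the two flips changes $s_i(\bm x)$, so $i$ contributes equally to both sides. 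For $i\in S_{j_1}\cap S_{j_2}$, the joint flip moves $s_i(\bm x)$ from $t:=s_i(\bm x)$ to $t+2\sigma$ whereas each single flip moves it from $t$ to $t+\sigma$, so the contribution of $i$ to $\Delta\tilde z_{j_1,j_2}(\bm x)-\Delta\tilde z_{j_1}(\bm x)-\Delta\tilde z_{j_2}(\bm x)$ equals $p_i(t+2\sigma)-2p_i(t+\sigma)+p_i(t)$, which is $\ge 0$ because $t+\sigma$ is the midpoint of $t$ and $t+2\sigma$ and $p_i$ is convex. Summing over $i$ gives $\Delta\tilde z_{j_1,j_2}(\bm x)\ge\Delta\tilde z_{j_1}(\bm x)+\Delta\tilde z_{j_2}(\bm x)\ge 0$.

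Combining the two cases, a strictly improving $2$-flip, i.e.\ $\Delta\tilde z_{j_1,j_2}(\bm x)<0$, can occur only when both $S_{j_1}\cap S_{j_2}\ne\emptyset$ and $x_{j_1}\ne x_{j_2}$. The conceptual content is entirely in the convexity of the per-constraint penalties $p_i$; the part that needs care is purely the bookkeeping --- translating ``$i\in S_{j_1}\cap S_{j_2}$ and both flips go up/down'' into the index sets appearing in the explicit formulas (\ref{eq:1-flip}) for $\Delta\tilde z_j^{\uparrow}$ and $\Delta\tilde z_j^{\downarrow}$, keeping track of boundary conditions such as $s_l(\bm x)\ge b_l$ versus $s_l(\bm x)>b_l$ (which is exactly where the slope of $p_l$ changes), and separately checking the three constraint types $M_L$, $M_G$, $M_E$, the last being the only one where both $y_i^+$ and $y_i^-$ are present. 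I expect this bookkeeping, rather than any step of the argument itself, to be where sign or inequality-direction errors are most likely to creep in.
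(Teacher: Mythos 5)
Your proposal is correct, and it proves exactly what the lemma needs: under 1-flip local optimality ($\Delta\tilde z_j(\bm x)\ge 0$ for all $j$), both the disjoint-support case and the equal-value case force $\Delta\tilde z_{j_1,j_2}(\bm x)\ge 0$. Your case decomposition is the same as the paper's: the author also treats $S_{j_1}\cap S_{j_2}=\emptyset$ by pure additivity of the two single-flip deltas, and then the cases $x_{j_1}=x_{j_2}=1$ and $x_{j_1}=x_{j_2}=0$ separately. Where you differ is in how the same-direction case is justified. The paper identifies explicitly which shared constraints create a discrepancy between the joint flip and the sum of single flips --- precisely those with $s_i(\bm x)=b_i+1$ (for two downward flips) or $s_i(\bm x)=b_i-1$ (for two upward flips) --- and writes $\Delta\tilde z_{j_1,j_2}(\bm x)$ as the sum of the two single-flip deltas plus these explicitly nonnegative correction terms built from $\widetilde w_i^+$ and $\widetilde w_i^-$. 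You instead absorb all penalties into per-constraint convex piecewise-linear functions $p_i$ and observe that for $i\in S_{j_1}\cap S_{j_2}$ the discrepancy is the second difference $p_i(t+2\sigma)-2p_i(t+\sigma)+p_i(t)\ge 0$, giving the superadditivity $\Delta\tilde z_{j_1,j_2}(\bm x)\ge\Delta\tilde z_{j_1}(\bm x)+\Delta\tilde z_{j_2}(\bm x)$ in one stroke. Your route is cleaner and more robust: it sidesteps exactly the boundary bookkeeping ($s_l(\bm x)\ge b_l$ versus $s_l(\bm x)>b_l$, and the three constraint types) that you flag as error-prone, and it would generalize verbatim to nonnegative integer coefficients $a_{ij}$ or other convex separable penalties. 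What the paper's explicit computation buys in exchange is the precise form of the correction terms, which is then reused immediately after the lemma to derive the closed-form expression for $\Delta\tilde z_{j_1,j_2}(\bm x)$ in the remaining case $x_{j_1}=1$, $x_{j_2}=0$ (the quantity the neighbor list is built around); your convexity argument establishes the sign but not that formula.
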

\begin{proof}
By the assumption of the lemma, $\Delta \tilde{z}_{j_1}(\bm{x}) \ge 0$ and $\Delta \tilde{z}_{j_2}(\bm{x}) \ge 0$ hold.
It is clear from (\ref{eq:1-flip}) that $\Delta \tilde{z}_{j_1,j_2} = \Delta \tilde{z}_{j_1}(\bm{x}) + \Delta \tilde{z}_{j_2}(\bm{x}) \ge 0$ holds if $S_{j_1} \cap S_{j_2} = \emptyset$.

We show that $\Delta \tilde{z}_{j_1,j_2}(\bm{x}) \ge 0$ holds if $S_{j_1} \cap S_{j_2} \not= \emptyset$ and $x_{j_1} = x_{j_2}$.
First, we consider the case of $x_{j_1} = x_{j_2} = 1$.
If $s_i(\bm{x}) = b_i + 1$ holds for $i \in S_{j_1} \cap S_{j_2} \cap (M_L \cup M_E)$, then decrease of the violation $y_i^+$ partly cancels by flipping $x_{j_1} = 1 \to 0$ and $x_{j_2} = 1 \to 0$ simultaneously.
Similarly, if $s_i(\bm{x}) = b_i + 1$ holds for $i \in S_{j_1} \cap S_{j_2} \cap (M_G \cup M_E)$, then a new violation $y_i^-$ occurs by flipping $x_{j_1} = 1 \to 0$ and $x_{j_2} = 1 \to 0$ simultaneously.
We then have
\begin{multline}
\Delta \tilde{z}_{j_1,j_2}(\bm{x}) = \Delta \tilde{z}_{j_1}^{\downarrow}(\bm{x}) + \Delta \tilde{z}_{j_2}^{\downarrow}(\bm{x}) + \displaystyle\sum_{i \in S_{j_1} \cap S_{j_2} \cap (M_L \cup M_E) \cap \{ l \mid s_l(\bm{x}) = b_l + 1\}} \widetilde{w}_i^+ \\
 + \displaystyle\sum_{i \in S_{j_1} \cap S_{j_2} \cap (M_G \cup M_E) \cap \{ l \mid s_l(\bm{x}) = b_l + 1\}} \widetilde{w}_i^- \ge 0.
\end{multline}

Next, we consider the case of $x_{j_1} = x_{j_2} = 0$.
If $s_i(\bm{x}) = b_i - 1$ holds for $i \in S_{j_1} \cap S_{j_2} \cap (M_L \cup M_E)$, then a new violation $y_i^+$ occurs by flipping $x_{j_1} = 0 \to 1$ and $x_{j_2} = 0 \to 1$ simultaneously.
Similarly, if $s_i(\bm{x}) = b_i - 1$ holds for $i \in S_{j_1} \cap S_{j_2} \cap (M_G \cup M_E)$, then decrease of the violation $y_i^-$ partly cancels by flipping $x_{j_1} = 0 \to 1$ and $x_{j_2} = 0 \to 1$ simultaneously.
We then have
\begin{multline}
\Delta \tilde{z}_{j_1,j_2}(\bm{x}) = \Delta \tilde{z}_{j_1}^{\uparrow}(\bm{x}) + \Delta \tilde{z}_{j_2}^{\uparrow}(\bm{x}) + \displaystyle\sum_{i \in S_{j_1} \cap S_{j_2} \cap (M_L \cup M_E) \cap \{ l \mid s_l(\bm{x}) = b_l - 1 \}} \widetilde{w}_i^+\\
 + \displaystyle\sum_{i \in S_{j_1} \cap S_{j_2} \cap (M_G \cup M_E) \cap \{ l \mid s_l(\bm{x}) = b_l - 1 \}} \widetilde{w}_i^- \ge 0.
\end{multline}
\qed
\end{proof}

Based on this lemma, we consider only the case of $x_{j_1} = 1$ and $x_{j_2} = 0$.
If $s_i(\bm{x}) = b_i$ holds for $i \in S_{j_1} \cap S_{j_2} \cap (M_L \cup M_E)$, then increase of the violation $y_i^+$ by flipping $x_{j_2} = 0 \to 1$ cancels by flipping $x_{j_1} = 1 \to 0$ simultaneously, while no decrease of the violation $y_i^+$ occurs by flipping $x_{j_1} = 1 \to 0$ independently.
Similarly, if $s_i(\bm{x}) = b_i$ holds for $i \in S_{j_1} \cap S_{j_2} \cap (M_G \cup M_E)$, then increase of the violation $y_i^-$ by flipping $x_{j_1} = 1 \to 0$ cancels by flipping $x_{j_2} = 0 \to 1$ simultaneously, while no decrease of the violation $y_i^-$ occurs by flipping $x_{j_2} = 0 \to 1$ independently.
We then have
\begin{equation}
\label{eq:2-flip}
\Delta \tilde{z}_{j_1,j_2}(\bm{x}) = \Delta \tilde{z}^{\downarrow}_{j_1}(\bm{x}) + \Delta \tilde{z}^{\uparrow}_{j_2}(\bm{x}) - \displaystyle\sum_{i \in \bar{S}(\bm{x}) \cap (M_L \cup M_E)} \widetilde{w}^+_i - \sum_{i \in \bar{S}(\bm{x}) \cap (M_G \cup M_E)} \widetilde{w}^-_i,
\end{equation}
where $\bar{S}(\bm{x}) = \{ i \in S_{j_1} \cap S_{j_2} \mid s_i(\bm{x}) = b_i \}$.
From these results, the 2-flip neighborhood can be restricted to the set of solutions satisfying $x_{j_1} \not= x_{j_2}$ and $\bar{S}(\bm{x}) \not= \emptyset$.
However, it might not be possible to search this set efficiently without first extracting it.
We thus construct a neighbor list that stores promising pairs of variables $x_{j_1}$ and $x_{j_2}$ for efficiency (see Section \ref{sec:variable-association} for details).

To increase the efficiency of 2-FNLS, we decompose the neighborhood $\textnormal{NB}_2(\bm{x})$ into a number of sub-neighborhoods.
Let $\textnormal{NB}_2^{(j_1)}(\bm{x}) = \{ \bm{x}^{\prime} \in \textnormal{NB}_2(\bm{x}) \mid x_{j_1} = 1, x_{j_1}^{\prime} = 0 \}$ be the subset of $\textnormal{NB}_2(\bm{x})$ obtainable by flipping $x_{j_1} = 1 \to 0$.
2-FNLS searches $\textnormal{NB}_2^{(j_1)}(\bm{x})$ for each $j_1 \in X$ in ascending order of $\Delta \tilde{z}_{j_1}^{\downarrow}(\bm{x})$.
If an improved solution exists, it chooses the pair $j_1$ and $j_2$ with the minimum value of $\Delta \tilde{z}_{j_1,j_2}(\bm{x})$ among $\textnormal{NB}_2^{(j_1)}(\bm{x})$ and flips $x_{j_1} = 1 \to 0$ and $x_{j_2} = 0 \to 1$.
2-FNLS skips to search the remaining sub-neighborhoods and immediately return to search $\textnormal{NB}_1(\bm{x})$ whenever an improved solution is obtained in a sub-neighborhood $\textnormal{NB}_2^{(j_1)}(\bm{x})$.
The first version of 2-FNLS is formally described as Algorithm~\ref{alg:2fnls}.

\begin{algorithm}
\caption{2-FNLS($\bm{x},\widetilde{\bm{w}}^+,\widetilde{\bm{w}}^-$)\label{alg:2fnls}}
{\small
\begin{spacing}{1.0}
\begin{algorithmic}[1]
\Require A solution $\bm{x}$ and penalty weight vectors $\widetilde{\bm{w}}^+$ and $\widetilde{\bm{w}}^-$.
\Ensure A solution $\bm{x}$.
\Statex
\State \textbf{START:}
\State $\bm{x}^{\prime} \gets \bm{x}$
\For{$j \in N$} \Comment{Search $\textnormal{NB}_1(\bm{x})$}
\If{$\tilde{z}(\bm{x}) + \Delta \tilde{z}_j(\bm{x}) < \tilde{z}(\bm{x}^{\prime})$}
\State $\bm{x}^{\prime} \gets \bm{x}$, $x_j^{\prime} \gets 1 - x_j^{\prime}$
\EndIf
\EndFor
\If{$\tilde{z}(\bm{x}^{\prime}) < \tilde{z}(\bm{x})$}
\State $\bm{x} \gets \bm{x}^{\prime}$
\State \textbf{goto START}
\EndIf
\For{$j_1 \in X$ in ascending order of $\Delta \tilde{z}_{j_1}^{\downarrow}(\bm{x})$} \Comment{Search $\textnormal{NB}_2(\bm{x})$}
\State $\bm{x}^{\prime} \gets \bm{x}$
\For{$j_2 \in N \setminus X$} \Comment{Search $\textnormal{NB}_2^{(j_1)}(\bm{x})$}
\If{$\tilde{z}(\bm{x}) + \Delta \tilde{z}_{j_1,j_2}(\bm{x}) < \tilde{z}(\bm{x}^{\prime})$}
\State $\bm{x}^{\prime} \gets \bm{x}$, $x_{j_1}^{\prime} \gets 0$, $x_{j_2}^{\prime} \gets 1$
\EndIf
\EndFor
\If{$\tilde{z}(\bm{x}^{\prime}) < \tilde{z}(\bm{x})$}
\State $\bm{x} \gets \bm{x}^{\prime}$
\State \textbf{goto START}
\EndIf
\EndFor
\end{algorithmic}
\end{spacing}
}
\end{algorithm}

\section{Efficient incremental evaluation\label{sec:eval}}
The 2-FNLS requires $\textnormal{O}(\sigma)$ time to compute the value of the evaluation function $\tilde{z}(\bm{x})$ for the current solution $\bm{x}$ if implemented naively, where $\sigma = \sum_{i \in M} \sum_{j \in N} a_{ij}$ denote the number of nonzero elements in the constraint matrix $(a_{ij})$.
To overcome this, we first develop a standard incremental evaluation of $\Delta \tilde{z}_j^{\uparrow}(\bm{x})$ and $\Delta \tilde{z}_j^{\downarrow}(\bm{x})$ in $\textnormal{O}(|S_j|)$ time by keeping the values of the left-hand side of constraints $s_i(\bm{x})$ for $i \in M$ in memory.
We further develop an improved incremental evaluation of $\Delta \tilde{z}_j^{\uparrow}(\bm{x})$ and $\Delta \tilde{z}_j^{\downarrow}(\bm{x})$ in $\textnormal{O}(1)$ time by keeping additional auxiliary data in memory.
By using this, 2-FNLS is also able to evaluate $\Delta\tilde{z}_{j_1,j_2}(\bm{x})$ in $\textnormal{O}(|S_j|)$ time by (\ref{eq:2-flip}).

We first consider a standard incremental evaluation of $\Delta \tilde{z}_j^{\uparrow}(\bm{x})$ and $\Delta \tilde{z}_j^{\downarrow}(\bm{x})$ in $\textnormal{O}(|S_j|)$ time using the following formulas:
\begin{equation}
\begin{array}{lcl}
\Delta \tilde{z}_j^{\uparrow}(\bm{x}) & = & c_j + \Delta \tilde{p}_j^{\uparrow}(\bm{x}) + \Delta \tilde{q}_j^{\uparrow}(\bm{x}),\\
\Delta \tilde{p}_j^{\uparrow}(\bm{x}) & = & \displaystyle\sum_{i \in S_j \cap (M_L \cup M_E)} \widetilde{w}_i^+ \left( |(s_i(\bm{x})+1)-b_i|_+ - |s_i(\bm{x}) - b_i|_+ \right),\\
\Delta \tilde{q}_j^{\uparrow}(\bm{x}) & = & \displaystyle\sum_{i \in S_j \cap (M_G \cup M_E)} \widetilde{w}_i^- \left( |b_i - (s_i(\bm{x})+1)|_+ - |b_i - s_i(\bm{x})|_+ \right),
\end{array}
\end{equation}
\begin{equation}
\begin{array}{lcl}
\Delta \tilde{z}_j^{\downarrow}(\bm{x}) & = & - c_j + \Delta \tilde{p}_j^{\downarrow}(\bm{x}) + \Delta \tilde{q}_j^{\downarrow}(\bm{x}),\\
\Delta \tilde{p}_j^{\downarrow}(\bm{x}) & = & \displaystyle\sum_{i \in S_j \cap (M_L \cup M_E)} \widetilde{w}_i^+ \left( |(s_i(\bm{x})-1)-b_i|_+ - |s_i(\bm{x}) - b_i|_+ \right),\\
\Delta \tilde{q}_j^{\downarrow}(\bm{x}) & = & \displaystyle\sum_{i \in S_j \cap (M_G \cup M_E)} \widetilde{w}_i^- \left( |b_i - (s_i(\bm{x})-1)|_+ - |b_i - s_i(\bm{x})|_+ \right),
\end{array}
\end{equation}
where 2-FNLS keeps the values of the left-hand side of constraints $s_i(\bm{x})$ for $i \in M$ in memory.
2-FNLS updates $s_i(\bm{x})$ for $i \in S_j$ in $\textnormal{O}(|S_j|)$ time by $s_i(\bm{x}^{\prime}) \gets s_i(\bm{x}) + 1$ and $s_i(\bm{x}^{\prime}) \gets s_i(\bm{x}) - 1$ when the current solution $\bm{x}$ moves to $\bm{x}^{\prime}$ by flipping $x_j = 0 \to 1$ and $x_j = 1 \to 0$, respectively.

We further develop an improved incremental evaluation of $\Delta \tilde{z}_j^{\uparrow}(\bm{x})$ and $\Delta \tilde{z}_j^{\downarrow}(\bm{x})$ in $\textnormal{O}(1)$ time by directly keeping $\Delta \tilde{p}_j^{\uparrow}(\bm{x})$, $\Delta \tilde{q}_j^{\uparrow}(\bm{x})$ for $j \in N \setminus X$ and $\Delta \tilde{p}_j^{\downarrow}(\bm{x})$, $\Delta \tilde{q}_j^{\downarrow}(\bm{x})$ for $j \in X$ in memory.
When the current solution $\bm{x}$ moves to $\bm{x}^{\prime}$ by flipping $x_j = 0 \to 1$, 2-FNLS first updates $s_i(\bm{x})$ for $i \in S_j$ in $\textnormal{O}(|S_j|)$ time by $s_i(\bm{x}^{\prime}) \gets s_i(\bm{x}) + 1$, and then updates $\Delta \tilde{p}_k^{\uparrow}(\bm{x})$, $\Delta \tilde{q}_k^{\uparrow}(\bm{x})$ for $k \in N_i \setminus X$, $i \in S_j$ and $\Delta \tilde{p}_k^{\downarrow}(\bm{x})$, $\Delta \tilde{q}_k^{\downarrow}(\bm{x})$ for $k \in N_i \cap X$, $i \in S_j$ in $\textnormal{O}(\sum_{i \in S_j}|N_i|)$ time using the following formulas:
\begin{equation}
\begin{array}{lcl}
\Delta \tilde{p}_k^{\uparrow}(\bm{x}^{\prime}) & \gets & \Delta \tilde{p}_k^{\uparrow}(\bm{x}) + \displaystyle\sum_{i \in S_j \cap S_k \cap (M_L \cup M_E)} \widetilde{w}_i^+ \left( \Delta y_i^+(\bm{x}^{\prime}) - \Delta y_i^+(\bm{x}) \right),\\
\Delta \tilde{q}_k^{\uparrow}(\bm{x}^{\prime}) & \gets & \Delta \tilde{q}_k^{\uparrow}(\bm{x}) + \displaystyle\sum_{i \in S_j \cap S_k \cap (M_G \cup M_E)} \widetilde{w}_i^- \left( \Delta y_i^-(\bm{x}^{\prime}) - \Delta y_i^-(\bm{x}) \right),\\
\Delta \tilde{p}_k^{\downarrow}(\bm{x}^{\prime}) & \gets & \Delta \tilde{p}_k^{\downarrow}(\bm{x}) + \displaystyle\sum_{i \in S_j \cap S_k \cap (M_L \cup M_E)} \widetilde{w}_i^+ \left( \Delta y_i^+(\bm{x}^{\prime}) - \Delta y_i^+(\bm{x}) \right),\\
\Delta \tilde{q}_k^{\downarrow}(\bm{x}^{\prime}) & \gets & \Delta \tilde{q}_k^{\downarrow}(\bm{x}) + \displaystyle\sum_{i \in S_j \cap S_k \cap (M_G \cup M_E)} \widetilde{w}_i^- \left( \Delta y_i^-(\bm{x}^{\prime}) - \Delta y_i^-(\bm{x}) \right),
\end{array}
\end{equation}
where
\begin{equation}
\begin{array}{lcl}
\Delta y_i^+(\bm{x}^{\prime}) & = & | (s_i(\bm{x}^{\prime}) + 1) - b_i |_+ - |s_i(\bm{x}^{\prime}) - b_i|_+,\\
\Delta y_i^+(\bm{x}) & = & |s_i(\bm{x}^{\prime}) - b_i|_+ - |s_i(\bm{x}) - b_i|_+,\\
\Delta y_i^-(\bm{x}^{\prime}) & = & | b_i - (s_i(\bm{x}^{\prime}) + 1) |_+ - |b_i - s_i(\bm{x}^{\prime})|_+,\\
\Delta y_i^-(\bm{x}) & = & |b_i - s_i(\bm{x}^{\prime})|_+ - |b_i - s_i(\bm{x})|_+.
\end{array}
\end{equation}
Similarly, when the current solution $\bm{x}$ moves to $\bm{x}^{\prime}$ by flipping $x_j = 1 \to 0$, 2-FNLS first updates $s_i(\bm{x})$ for $i \in S_j$ in $\textnormal{O}(|S_j|)$ time, and then updates $\Delta \tilde{p}_k^{\uparrow}(\bm{x})$, $\Delta \tilde{q}_k^{\uparrow}(\bm{x})$ for $k \in N_i \setminus X$, $i \in S_j$ and $\Delta \tilde{p}_k^{\downarrow}(\bm{x})$, $\Delta \tilde{q}_k^{\downarrow}(\bm{x})$ for $k \in N_i \cap X$, $i \in S_j$ in $\textnormal{O}(\sum_{i \in S_j}|N_i|)$ time.
We note that the computation time for updating the auxiliary data has little effect on the total computation time of 2-FNLS because the number of solutions actually visited is much less than the number of neighbor solutions evaluated in most cases.

\section{Exploiting variable associations\label{sec:variable-association}}
It is known that the quality of locally optimal solutions improves if a larger neighborhood is used.
However, the computation time to search the neighborhood $\textnormal{NB}_r(\bm{x})$ also increases exponentially with $r$, since $|\textnormal{NB}_r(\bm{x})| = \textnormal{O}(n^r)$ generally holds.
A large amount of computation time is thus needed in practice in order to scan all candidates in $\textnormal{NB}_2(\bm{x})$ for large-scale instances with millions of variables.
To overcome this, we develop a data mining approach that identifies promising pairs of flipping variables in $\textnormal{NB}_2(\bm{x})$ by extracting variable associations from the instance to be solved using only a small amount of computation time.

\begin{figure}[tb]
\centering
\begin{minipage}{0.48\textwidth}
{\scriptsize
\[
(a_{ij}) = \left(
\begin{array}{cccccccccc}
0 & 1 & 1 & 1 & 0 & 0 & 0 & 0 & 1 & 0\\
1 & 0 & 1 & 0 & 1 & 0 & 1 & 0 & 0 & 1\\
1 & 1 & 0 & 0 & 0 & 0 & 1 & 1 & 0 & 1\\
0 & 0 & 1 & 0 & 1 & 1 & 1 & 1 & 1 & 0\\
1 & 1 & 0 & 1 & 0 & 1 & 0 & 1 & 0 & 0\\
0 & 0 & 0 & 1 & 1 & 0 & 0 & 0 & 1 & 1
\end{array}
\right)
\longrightarrow
\]
}
\end{minipage}
\hfill
\begin{minipage}{0.48\textwidth}
\centering
\includegraphics[width=0.95\textwidth]{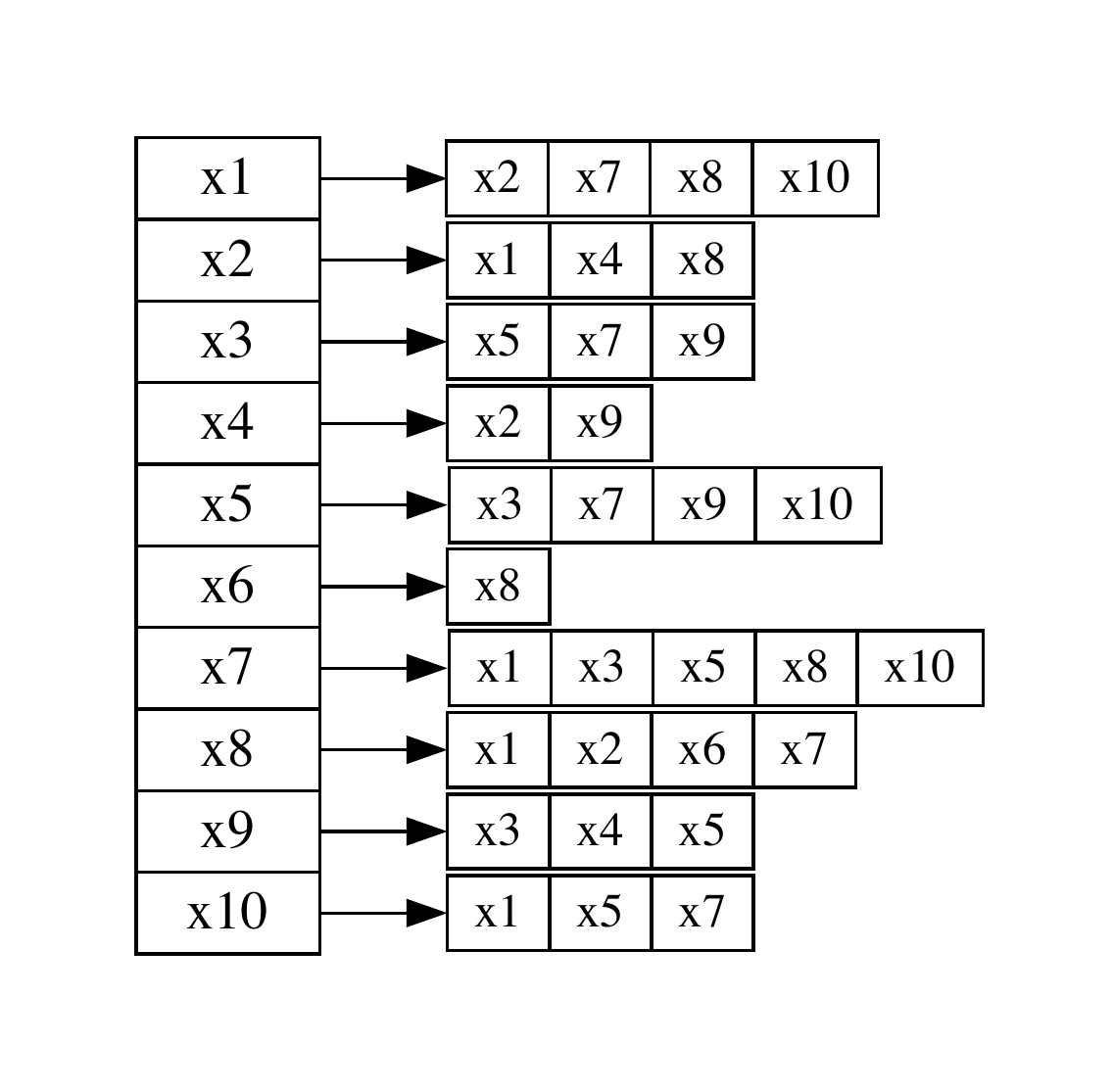}
\end{minipage}
\caption{Example of the neighbor list\label{fig:nb-list}}
\end{figure}

From the results in Section~\ref{sec:local-search}, the 2-flip neighborhood can be restricted to the set of solutions satisfying $x_{j_1} \not= x_{j_2}$ and $\bar{S}(\bm{x}) \not= \emptyset$.
We further observe from (\ref{eq:2-flip}) that it is favorable to select pairs of flipping variables $x_{j_1}$ and $x_{j_2}$ which gives a larger size $|S_{j_1} \cap S_{j_2}|$ in order to obtain $\Delta \tilde{z}_{j_1,j_2}(\bm{x}) < 0$.
Based on this observation, we keep a limited set of pairs of variables $x_{j_1}$ and $x_{j_2}$ for which $|S_{j_1} \cap S_{j_2}|$ is large in memory, and we call this the \emph{neighbor list} (Figure~\ref{fig:nb-list}).
We note that $|S_{j_1} \cap S_{j_2}|$ represents a certain kind of similarity between the subsets $S_{j_1}$ and $S_{j_2}$ (or column vectors $\bm{a}_{j_1}$ and $\bm{a}_{j_2}$ of the constraint matrix $(a_{ij})$) and we keep the $k$-nearest neighbors for each variable $x_j$ for $j \in N$ in the neighbor list.
(In Figure~\ref{fig:nb-list}, we keep the set of pairs of variables $x_{j_1}$ and $x_{j_2}$ having $|S_{j_1} \cap S_{j_2}| \ge 2$ in the neighbor list, because small examples often have many ties.)

For each variable $x_{j_1}$ for $j_1 \in N$, we store the first $k = \min \{ |N^{(j_1)}|, \alpha |M| \}$ variables $x_{j_2}$ ($j_2 \not= j_1$) in descending order of $|S_{j_1} \cap S_{j_2}|$ in the $j_1$th row of the neighbor list, where $N^{(j_1)} = \{ j_2 \in N \mid j_2 \not= j_1, S_{j_1} \cap S_{j_2} \not= \emptyset \}$ and $\alpha$ is a program parameter that we set to five.
Let $L[j_1]$ be the index set of variables $x_{j_2}$ stored in the $j_1$th row of the neighbor list.
We then reduce the number of candidates in $\textnormal{NB}_2(\bm{x})$ by restricting the pairs of flipping variables $x_{j_1}$ and $x_{j_2}$ to pairs in the neighbor list $j_1 \in X$ and $j_2 \in (N \setminus X) \cap L[j_1]$.

We note that it is still expensive to construct the whole neighbor list for large-scale instances with millions of variables.
To overcome this, we develop a lazy construction algorithm for the neighbor list.
That is, 2-FNLS starts from an empty neighbor list and generates the $j_1$th row of the neighbor list $L[j_1]$ only when 2-FNLS searches $\textnormal{NB}_2^{(j_1)}(\bm{x})$ for the first time.
The improved version of 2-FNLS is formally described as Algorithm~\ref{alg:mod-2fnls}.

\begin{algorithm}
\caption{2-FNLS($\bm{x},\widetilde{\bm{w}}^+,\widetilde{\bm{w}}^-$)\label{alg:mod-2fnls}}
{\small
\begin{spacing}{1.0}
\begin{algorithmic}[1]
\Require A solution $\bm{x}$ and penalty weight vectors $\widetilde{\bm{w}}^+$ and $\widetilde{\bm{w}}^-$.
\Ensure A solution $\bm{x}$.
\Statex
\For{$j_1 \in N$} \Comment{Initialize $L[j_1]$}
\State Set $L[j_1] \gets \emptyset$
\EndFor
\State \textbf{START:}
\State $\bm{x}^{\prime} \gets \bm{x}$
\For{$j \in N$} \Comment{Search $\textnormal{NB}_1(\bm{x})$}
\If{$\tilde{z}(\bm{x}) + \Delta \tilde{z}_j(\bm{x}) < \tilde{z}(\bm{x}^{\prime})$}
\State $\bm{x}^{\prime} \gets \bm{x}$, $x_j^{\prime} \gets 1 - x_j^{\prime}$
\EndIf
\EndFor
\If{$\tilde{z}(\bm{x}^{\prime}) < \tilde{z}(\bm{x})$}
\State $\bm{x} \gets \bm{x}^{\prime}$
\State \textbf{goto START}
\EndIf
\For{$j_1 \in X$ in ascending order of $\Delta \tilde{z}_{j_1}^{\downarrow}(\bm{x})$} \Comment{Search $\textnormal{NB}_2(\bm{x})$}
\If{$L[j_1] = \emptyset$} \Comment{Generate $L[j_1]$}
\State Generate $L[j_1]$
\EndIf
\State $\bm{x}^{\prime} \gets \bm{x}$
\For{$j_2 \in (N \setminus X) \cap L[j_1]$} \Comment{Search $\textnormal{NB}_2^{(j_1)}(\bm{x})$}
\If{$\tilde{z}(\bm{x}) + \Delta \tilde{z}_{j_1,j_2}(\bm{x}) < \tilde{z}(\bm{x}^{\prime})$}
\State $\bm{x}^{\prime} \gets \bm{x}$, $x_{j_1}^{\prime} \gets 0$, $x_{j_2}^{\prime} \gets 1$
\EndIf
\EndFor
\If{$\tilde{z}(\bm{x}^{\prime}) < \tilde{z}(\bm{x})$}
\State $\bm{x} \gets \bm{x}^{\prime}$
\State \textbf{goto START}
\EndIf
\EndFor
\end{algorithmic}
\end{spacing}
}
\end{algorithm}

A similar approach has been developed in local search algorithms for the Euclidean Traveling Salesman Problem (TSP) in which a sorted list containing only the $k$-nearest neighbors is stored for each city by using a geometric data structure called the $k$-dimensional tree \citep{JohnsonDS1997}.
However, it is not suitable for finding the $k$-nearest neighbors efficiently in high-dimensional spaces.
We thus extend it for application to the high-dimensional column vectors $\bm{a}_j \in \{ 0, 1 \}^m$ for $j \in N$ of BIPs by using a lazy construction algorithm for the neighbor list.

\section{4-flip neighborhood local search\label{sec:4-flip}}
We can regard the neighbor-list in Section~\ref{sec:variable-association} as an adjacency-list representation of a directed graph, and represent associations between variables by a corresponding directed graph called the \emph{$k$-nearest neighbor graph}.
Figure~\ref{fig:graph} illustrates an example of the $k$-nearest neighbor graph corresponding to the neighbor-list in Figure~\ref{fig:nb-list}.

\begin{figure}[tb]
\centering
\includegraphics[width=0.48\textwidth]{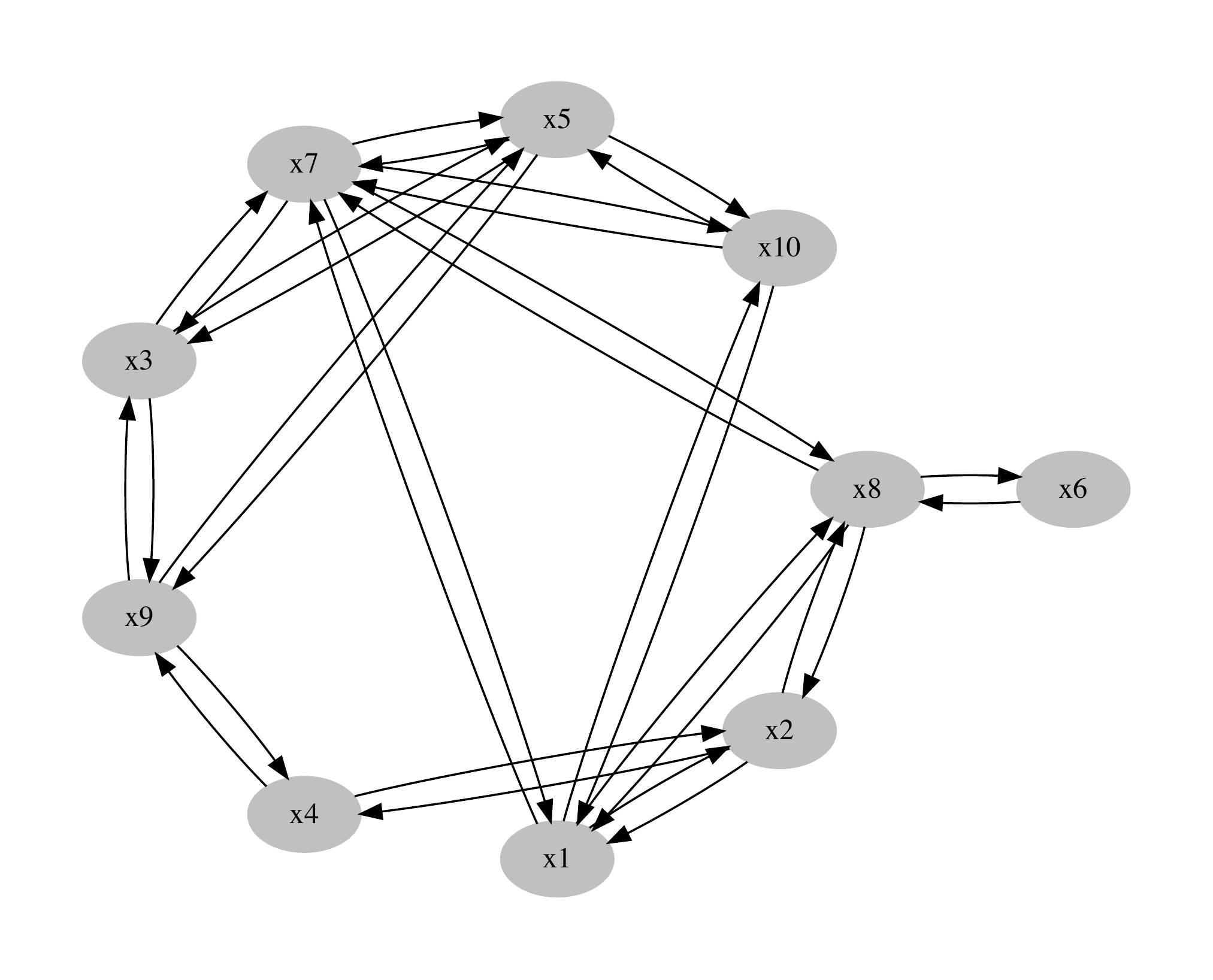}
\caption{Example of the $k$-nearest neighbor graph\label{fig:graph}}
\end{figure}

Using the $k$-nearest neighbor graph, we extend 2-FNLS to search a set of promising neighbor solutions in $\textnormal{NB}_4(\bm{x})$.
For each variable $x_{j_1}$ for $j_1 \in X$, we keep the variables $x_{j_2}$ for $j_2 \in (N \setminus X) \cap L[j_1]$ having the minimum value of $\Delta \tilde{z}_{j_1,j_2}(\bm{x})$ in memory as $j_2 = \pi(j_1)$.
The extended 2-FNLS, called the 4-Flip Neighborhood Local Search (4-FNLS) algorithm, then searches for an improved solution by flipping $x_{j_1} = 1 \to 0$, $x_{\pi(j_1)} = 0 \to 1$, $x_{j_3} = 1 \to 0$ and $x_{\pi(j_3)} = 0 \to 1$ for $j_1 \in X$ and $j_3 \in X \cap L[\pi(j_1)]$ satisfying $j_1 \not= j_3$ and $\pi(j_1) \not= \pi(j_3)$, i.e., flipping the values of variables alternately along 4-paths or 4-cycles in the $k$-nearest neighbor graph.
Let $\Delta \tilde{z}_{j_1,j_2,j_3,j_4}(\bm{x})$ be the increase in $\tilde{z}(\bm{x})$ due to simultaneously flipping $x_{j_1} = 1 \to 0$, $x_{j_2} = 0 \to 1$, $x_{j_3} = 1 \to 0$ and $x_{j_4} = 0 \to 1$.
4-FNLS computes $\Delta \tilde{z}_{j_1,j_2,j_3,j_4}(\bm{x})$ in $\textnormal{O}(|S_j|)$ time by applying the standard incremental evaluation alternately (see Section~\ref{sec:eval} for details).
4-FNLS is formally described with as Algorithm~\ref{alg:4fnls}.

\begin{algorithm}
\caption{4-FNLS($\bm{x},\widetilde{\bm{w}}^+,\widetilde{\bm{w}}^-$)\label{alg:4fnls}}
{\small
\begin{spacing}{1.0}
\begin{algorithmic}[1]
\Require A solution $\bm{x}$ and penalty weight vectors $\widetilde{\bm{w}}^+$ and $\widetilde{\bm{w}}^-$.
\Ensure A solution $\bm{x}$.
\Statex
\For{$j_1 \in N$} \Comment{Initialize $L[j_1]$}
\State Set $L[j_1] \gets \emptyset$
\EndFor
\State \textbf{START:}
\State $\bm{x}^{\prime} \gets \bm{x}$
\For{$j \in N$} \Comment{Search $\textnormal{NB}_1(\bm{x})$}
\If{$\tilde{z}(\bm{x}) + \Delta \tilde{z}_j(\bm{x}) < \tilde{z}(\bm{x}^{\prime})$}
\State $\bm{x}^{\prime} \gets \bm{x}$, $x_j^{\prime} \gets 1 - x_j^{\prime}$
\EndIf
\EndFor
\If{$\tilde{z}(\bm{x}^{\prime}) < \tilde{z}(\bm{x})$}
\State $\bm{x} \gets \bm{x}^{\prime}$
\State \textbf{goto START}
\EndIf
\For{$j_1 \in X$ in ascending order of $\Delta \tilde{z}_{j_1}^{\downarrow}(\bm{x})$} \Comment{Search $\textnormal{NB}_2(\bm{x})$}
\If{$L[j_1] = \emptyset$} \Comment{Generate $L[j_1]$}
\State Generate $L[j_1]$
\EndIf
\State $\bm{x}^{\prime} \gets \bm{x}$, $\Delta \tilde{z}^{\ast} \gets \infty$
\For{$j_2 \in (N \setminus X) \cap L[j_1]$} \Comment{Search $\textnormal{NB}_2^{(j_1)}(\bm{x})$}
\If{$\Delta \tilde{z}_{j_1,j_2}(\bm{x}) < \Delta \tilde{z}^{\ast}$} \Comment{Update $\pi(j_1)$}
\State $\Delta \tilde{z}^{\ast} \gets \Delta \tilde{z}_{j_1,j_2}(\bm{x})$, $\pi(j_1) \gets j_2$
\EndIf
\If{$\tilde{z}(\bm{x}) + \Delta \tilde{z}_{j_1,j_2}(\bm{x}) < \tilde{z}(\bm{x}^{\prime})$}
\State $\bm{x}^{\prime} \gets \bm{x}$, $x_{j_1}^{\prime} \gets 0$, $x_{j_2}^{\prime} \gets 1$
\EndIf
\EndFor
\If{$\tilde{z}(\bm{x}^{\prime}) < \tilde{z}(\bm{x})$}
\State $\bm{x} \gets \bm{x}^{\prime}$
\State \textbf{goto START}
\EndIf
\EndFor
\For{$j_1 \in X$ in ascending order of $\Delta \tilde{z}_{j_1,\pi(j_1)}(\bm{x})$} \Comment{Search $\textnormal{NB}_4(\bm{x})$}
\State $\bm{x}^{\prime} \gets \bm{x}$
\For{$j_3 \in X \cap L[\pi(j_1)]$ satisfying $j_3 \not = j_1$ and $\pi(j_1) \not= \pi(j_3)$}
\If{$\tilde{z}(\bm{x}) + \Delta \tilde{z}_{j_1,\pi(j_1),j_3,\pi(j_3)}(\bm{x}) < \tilde{z}(\bm{x}^{\prime})$}
\State $\bm{x}^{\prime} \gets \bm{x}$, $x_{j_1}^{\prime} \gets 0$, $x_{\pi(j_1)}^{\prime} \gets 1$, $x_{j_3}^{\prime} \gets 0$, $x_{\pi(j_3)}^{\prime} \gets 1$
\EndIf
\EndFor
\If{$\tilde{z}(\bm{x}^{\prime}) < \tilde{z}(\bm{x})$}
\State $\bm{x} \gets \bm{x}^{\prime}$
\State \textbf{goto START}
\EndIf
\EndFor
\end{algorithmic}
\end{spacing}
}
\end{algorithm}

\section{Adaptive control of penalty weights\label{sec:weight-control}}
In our algorithm, solutions are evaluated by the alternative evaluation function $\tilde{z}(\bm{x})$ in which the fixed penalty weight vectors $\bm{w}^+$ and $\bm{w}^-$ in the original evaluation function $z(\bm{x})$ has been replaced by $\widetilde{\bm{w}}^+$ and $\widetilde{\bm{w}}^-$, respectively, and the values of $\widetilde{w}_i^+$ for $i \in M_L \cup M_E$ and $\widetilde{w}_i^-$ for $i \in M_G \cup M_E$ are adaptively controlled in the search.

It is often reported that a single application of LS tends to stop at a locally optimal solution of insufficient quality when large penalty weights are used.
This is because it is often unavoidable to temporally increase the values of some violations $y_i^+$ and $y_i^-$ in order to reach an even better solution from a good solution through a sequence of neighborhood operations, and large penalty weights thus prevent LS from moving between such solutions.
To overcome this, we incorporate an adaptive adjustment mechanism for determining appropriate values of penalty weights $\widetilde{w}_i^+$ for $i \in M_L \cup M_E$ and $\widetilde{w}_i^-$ for $i \in M_G \cup M_E$ \citep{NonobeK2001,YagiuraM2006,UmetaniS2013}.
That is, LS is applied iteratively while updating the values of the penalty weights $\widetilde{w}_i^+$ for $i \in M_L \cup M_E$ and $\widetilde{w}_i^-$ for $i \in M_G \cup M_E$ after each call to LS.
We call this sequence of calls to LS the Weighting Local Search (WLS) according to \citep{SelmanB1993,ThorntonJ2005}.
This strategy is also referred as the \emph{breakout algorithm} \citep{MorrisP1993} and the \emph{dynamic local search} \citep{HutterF2002} in the literature.

Let $\bm{x}$ be the solution at which the previous local search stops.
WLS resumes LS from $\bm{x}$ after updating the penalty weight vectors $\widetilde{\bm{w}}^+$ and $\widetilde{\bm{w}}^-$.
We assume that the original penalty weights $w_i^+$ for $i \in M_L \cup M_E$ and $w_i^-$ for $i \in M_G \cup M_E$ are sufficiently large.
Starting from the original penalty weight vectors $(\widetilde{\bm{w}}^+, \widetilde{\bm{w}}^-) \gets (\bm{w}^+, \bm{w}^-)$, the penalty weight vectors $\widetilde{\bm{w}}^+$ and $\widetilde{\bm{w}}^-$ are updated as follows.
Let $\bm{x}^{\ast}$ be the best feasible solution obtained so far for the original formulation (\ref{eq:bip}).
We modify 4-FNLS to keep the best feasible solution $\bm{x}^{\ast}$, and update it whenever an improved feasible solution is found.
If $\tilde{z}(\bm{x}) \ge z(\bm{x}^{\ast})$ holds, WLS uniformly decreases the penalty weights by $(\widetilde{\bm{w}}^+, \widetilde{\bm{w}}^-) \gets \beta (\widetilde{\bm{w}}^+, \widetilde{\bm{w}}^-)$, where $0 < \beta < 1$ is a program parameter that is adaptively computed so that the new value of $\Delta \tilde{z}_j^{\downarrow}(\bm{x})$ becomes negative for 10\% of variables $x_j$ for $j \in X$.
Otherwise, WLS increases the penalty weights by
\begin{equation}
\begin{array}{ll}
\label{eq:weighting}
\widetilde{w}_i^+ \gets \displaystyle \widetilde{w}_i^+ + \frac{z(\bm{x}^{\ast}) - \tilde{z}(\bm{x})}{\sum_{l \in M} (y_l^{+^2} + y_l^{-^2})} \; y_i^+, & i \in M_L \cup M_E,\\
\widetilde{w}_i^- \gets \displaystyle \widetilde{w}_i^- + \frac{z(\bm{x}^{\ast}) - \tilde{z}(\bm{x})}{\sum_{l \in M} (y_l^{+^2} + y_l^{-^2})} \; y_i^-, & i \in M_G \cup M_E.
\end{array}
\end{equation}
WLS iteratively applies LS, updating the penalty weight vectors $\widetilde{\bm{w}}^+$ and $\widetilde{\bm{w}}^-$ after each call to LS until the time limit is reached.
WLS is formally described as Algorithm~\ref{alg:wls}.
Note that we set the initial solution to $\bm{x} = \bm{0}$ in practice.

\begin{algorithm}
\caption{WLS($\bm{x}$)\label{alg:wls}}
{\small
\begin{spacing}{1.0}
\begin{algorithmic}[1]
\Require An initial solution $\bm{x}$.
\Ensure The best feasible solution $\bm{x}^{\ast}$.
\Statex
\State $\tilde{\bm{x}} \gets \bm{x}$, $z^{\ast} \gets \infty$, $(\widetilde{\bm{w}}^+, \widetilde{\bm{w}}^-) \gets (\bm{w}^+, \bm{w}^-)$
\Repeat
\State $(\tilde{\bm{x}}, \bm{x}^{\prime}) \gets \textnormal{4-FNLS}(\tilde{\bm{x}}, \widetilde{\bm{w}}^+, \widetilde{\bm{w}}^-)$
\State \Comment{$\bm{x}^{\prime}$ is the best feasible solution obtained in $\textnormal{4-FNLS}(\tilde{\bm{x}}, \widetilde{\bm{w}}^+, \widetilde{\bm{w}}^-)$}
\If{$z(\bm{x}^{\prime}) < z^{\ast}$}
\State $\bm{x}^{\ast} \gets \bm{x}^{\prime}$, $z^{\ast} \gets z(\bm{x}^{\prime})$
\EndIf
\If{$\tilde{z}(\tilde{\bm{x}}) \ge z^{\ast}$}
\State Decrease the penalty weights by $(\widetilde{\bm{w}}^+, \widetilde{\bm{w}}^-) \gets \beta (\widetilde{\bm{w}}^+, \widetilde{\bm{w}}^-)$
\Else
\State Increase the penalty weights by (\ref{eq:weighting})
\EndIf
\Until{The time limit is reached}
\end{algorithmic}
\end{spacing}
}
\end{algorithm}

\section{Computational results\label{sec:result}}
We report computational results of our algorithm for the SCP instances from \citep{BeasleyJE1990,UmetaniS2013} and the SPP instances from \citep{BorndorferR1998,ChuPC1998,KochT2011}.
Tables~\ref{tb:scp-instance} and \ref{tb:spp-instance} summarize the information about the SCP and SPP instances, respectively.
The first column shows the name of the group (or the instance), and the number in parentheses shows the number of instances in the group.
In the subsequent part of this paper, we show the average value in each cell of the tables for the instances of the group.
The detailed computational results are in the online supplement.
The second column ``$z_{\scalebox{0.5}{LP}}$'' shows the optimal values of the LP relaxation problems.
The third column ``$z_{\scalebox{0.5}{best}}$'' shows the best upper bounds among all algorithms in this paper.
The fourth and sixth columns ``\#cst.'' show the number of constraints, and the fifth and seventh columns ``\#vars.'' show the number of variables.
Since several preprocessing techniques that often reduce the size of instances by removing redundant rows and columns are known \citep{BorndorferR1998}, all algorithms are tested on the presolved instances.
The instances marked ``$\star$'' are hard instances that cannot be solved optimally within at least 1~h by the tested Mixed Integer Programming (MIP) solvers (i.e., CPLEX12.6, Gurobi5.6.3 and SCIP3.1).

\begin{table}[t]
\caption{Benchmark instances for SCP\label{tb:scp-instance}}
\centering
\smallskip
{\tabcolsep=0.5em
{\tiny
\begin{tabular}{lrrrrcrrr} \hline
& & & \multicolumn{2}{c}{original} & & \multicolumn{2}{c}{presolved} & \\ \cline{4-5} \cline{7-8}
instance & \multicolumn{1}{c}{$z_{\scalebox{0.5}{LP}}$} & \multicolumn{1}{c}{$z_{\scalebox{0.5}{best}}$} & \multicolumn{1}{c}{\#cst.} & \multicolumn{1}{c}{\#var.} & & \multicolumn{1}{c}{\#cst.} & \multicolumn{1}{c}{\#var.} & \multicolumn{1}{c}{time limit} \\ \hline
$\star$G.1--5 (5) & 149.48 & 166.4 & 1000.0 & 10000.0 & & 1000.0 & 10000.0 & 600~s \\
$\star$H.1--5 (5) & 45.67 & 59.6 & 1000.0 & 10000.0 & & 1000.0 & 10000.0 & 600~s \\
$\star$I.1--5 (5) & 138.97 & 158.0 & 1000.0 & 50000.0 & & 1000.0 & 49981.0 & 1200~s \\
$\star$J.1--5 (5) & 104.78 & 129.0 & 1000.0 & 100000.0 & & 1000.0 & 99944.8 & 1200~s \\
$\star$K.1--5 (5) & 276.67 & 313.2 & 2000.0 & 100000.0 & & 2000.0 & 99971.0 & 1800~s \\
$\star$L.1--5 (5) & 209.34 & 258.0 & 2000.0 & 200000.0 & & 2000.0 & 199927.6 & 1800~s \\
$\star$M.1--5 (5) & 415.78 & 549.8 & 5000.0 & 500000.0 & & 5000.0 & 499988.0 & 3600~s \\
$\star$N.1--5 (5) & 348.93 & 503.8 & 5000.0 & 1000000.0 & & 5000.0 & 999993.2 & 3600~s \\
RAIL507 & 172.15 & $\ast$174 & 507 & 63009 & & 440 & 20700 & 600~s \\
RAIL516 & 182.00 & $\ast$182 & 516 & 47311 & & 403 & 37832 & 600~s \\
RAIL582 & 209.71 & $\ast$211 & 582 & 55515 & & 544 & 27427 & 600~s \\
RAIL2536 & 688.40 & $\ast$689 & 2536 & 1081841 & & 2001 & 480597 & 3600~s \\
$\star$RAIL2586 & 935.92 & 947 & 2586 & 920683 & & 2239 & 408724 & 3600~s \\
$\star$RAIL4284 & 1054.05 & 1064 & 4284 & 1092610 & & 3633 & 607884 & 3600~s \\
$\star$RAIL4872 & 1509.64 & 1530 & 4872 & 968672 & & 4207 & 482500 & 3600~s \\ \hline
\end{tabular}
}
}
\end{table}

\begin{table}[t]
\caption{Benchmark instances for SPP\label{tb:spp-instance}}
\centering
\smallskip
{\tabcolsep=0.5em
{\tiny
\begin{tabular}{lrrrrcrrr} \hline
& & & \multicolumn{2}{c}{original} & & \multicolumn{2}{c}{presolved} & \\ \cline{4-5} \cline{7-8}
instance & \multicolumn{1}{c}{$z_{\scalebox{0.5}{LP}}$} & \multicolumn{1}{c}{$z_{\scalebox{0.5}{best}}$} & \multicolumn{1}{c}{\#cst.} & \multicolumn{1}{c}{\#var.} & & \multicolumn{1}{c}{\#cst.} & \multicolumn{1}{c}{\#var.} & \multicolumn{1}{c}{time limit} \\ \hline
aa01--06 (6) & 40372.75 & $\ast$40588.83 & 675.3 & 7587.3 & & 478.7 & 6092.7 & 600~s \\
us01--04 (4) & 9749.44 & $\ast$9798.25 & 121.3 & 295085.0 & & 65.5 & 85772.5 & 600~s \\
v0415--0421 (7) & 2385764.17 & $\ast$2393303.71 & 1479.3 & 30341.6 & & 263.9 & 7277.0 & 600~s \\
v1616--1622 (7) & 1021288.76 & $\ast$1025552.43 & 1375.7 & 83986.7 & & 1171.9 & 51136.7 & 600~s\\
t0415--0421 (7) & 5199083.74 & 5453475.71 & 1479.3 & 7304.3 & & 820.7 & 2617.4 & 600~s \\
$\star$t1716--1722 (7) & 121445.76 & 157516.29 & 475.7 & 58981.3 & & 475.7 & 13193.6 & 3600~s \\
$\star$ds & 57.23 & 187.47 & 656 & 67732 & & 656 & 67730 & 3600~s \\
$\star$ds-big & 86.82 & 731.69 & 1042 & 174997 & & 1042 & 173026 & 3600~s \\
$\star$ivu06-big & 135.43 & 166.02 & 1177 & 2277736 & & 1177 & 2197774 & 3600~s \\
$\star$ivu59 & 884.46 & 1878.83 & 3436 & 2569996 & & 3413 & 2565083 & 3600~s \\ \hline
\end{tabular}
}
}
\end{table}

We compare the results of our algorithm with those of its variations.
Our algorithm was implemented in C language and tested on a MacBook Pro laptop computer with a 2.7~GHz Intel Core i7 processor and 16~GB memory. 
All variations of our algorithm were run on a single thread under MacOS10.12 operating system with time limits as shown in Tables~\ref{tb:scp-instance} and \ref{tb:spp-instance}.

Tables~\ref{tb:scp-result1} and \ref{tb:spp-result1} show the relative gap $\frac{z(\bm{x}) - z_{\scalebox{0.5}{best}}}{z(\bm{x})} \times 100$ (\%) of the best feasible solutions for the original formulation (\ref{eq:bip}) achieved by the variations of our algorithm for the SCP and SPP instances, respectively.
We note that all variations of our algorithm found feasible solutions for all SCP instances.
The second column ``no-list'' shows the results of our algorithm without the neighbor list, and the third column ``no-inc'' shows the results of our algorithm without the improved incremental evaluation (i.e., only applying the standard incremental evaluation in Section~\ref{sec:eval}).
The fourth column ``2-FNLS'' shows the results of our algorithm without the 4-flip neighborhood search (i.e., only applying 2-FNLS in Section~\ref{sec:variable-association}).
In Table~\ref{tb:spp-result1}, the number in parentheses shows the number of instances for which the algorithm obtained at least one feasible solution within the time limit, and the relative gap shows the average value for them.

\begin{table}[t]
\caption{Computational results of variations of the proposed algorithm for SCP instances\label{tb:scp-result1}}
\centering
\smallskip
{\tabcolsep=0.5em
{\tiny
\begin{tabular}{lrrrr} \hline
instance & \multicolumn{1}{c}{no-list} & \multicolumn{1}{c}{no-inc} & \multicolumn{1}{c}{2-FNLS} & \multicolumn{1}{c}{proposed} \\ \hline
$\star$G.1--5 (5) & 0.00\% & 0.12\% & 0.00\% & 0.00\% \\
$\star$H.1--5 (5) & 0.31\% & 0.31\% & 0.31\% & 0.00\% \\
$\star$I.1--5 (5) & 1.24\% & 0.86\% & 0.50\% & 0.50\% \\
$\star$J.1--5 (5) & 2.42\% & 1.67\% & 1.68\% & 1.53\% \\
$\star$K.1--5 (5) & 2.12\% & 1.69\% & 1.32\% & 1.26\% \\
$\star$L.1--5 (5) & 3.44\% & 3.51\% & 2.35\% & 2.05\% \\
$\star$M.1--5 (5) & 10.97\% & 8.33\% & 2.79\% & 2.65\% \\
$\star$N.1--5 (5) & 19.11\% & 22.06\% & 4.76\% & 5.47\% \\
RAIL507 & 0.00\% & 0.57\% & 0.00\% & 0.00\% \\
RAIL516 & 0.00\% & 0.00\% & 0.00\% & 0.00\% \\
RAIL582 & 0.47\% & 0.47\% & 0.47\% & 0.00\% \\
RAIL2536 & 2.68\% & 2.27\% & 1.29\% & 0.72\% \\
$\star$RAIL2586 & 2.57\% & 2.87\% & 2.27\% & 1.56\% \\
$\star$RAIL4284 & 5.42\% & 5.17\% & 2.74\% & 2.12\% \\
$\star$RAIL4872 & 4.43\% & 3.47\% & 2.36\% & 1.80\% \\ \hline
avg. (all) & 4.55\% & 4.42\% & 1.65\% & 1.56\% \\
avg. (with stars) & 4.89\% & 4.75\% & 1.77\% & 1.69\% \\ \hline
\end{tabular}
}
}
\end{table}

\begin{table}[t]
\caption{Computational results of variations of the proposed algorithm for SPP instances\label{tb:spp-result1}}
\centering
\smallskip
{\tabcolsep=0.5em
{\tiny
\begin{tabular}{lrrrr} \hline
instance & \multicolumn{1}{c}{no-list} & \multicolumn{1}{c}{no-inc} & \multicolumn{1}{c}{2-FNLS} & \multicolumn{1}{c}{proposed} \\ \hline
aa01--06 (6) & 2.33\%(6) & 2.26\%(6) & 2.07\%(6) & 1.60\%(6)  \\
us01--04 (4) & 0.04\%(4) & 1.16\%(4) & 0.63\%(4) & 0.04\%(4)   \\
v0415--0421 (7) & 0.00\%(7) & 0.00\%(7) & 0.00\%(7) & 0.00\%(7)   \\
v1616--1622 (7) & 0.62\%(7) & 0.17\%(7) & 0.09\%(7) & 0.09\%(7) \\
t0415--0421 (7) & 1.46\%(5) & 1.30\%(6) & 0.29\%(7) & 0.92\%(6)  \\
$\star$t1716--1722 (7) & 5.46\%(7) & 4.33\%(7) & 5.71\%(7) & 2.45\%(7)   \\
$\star$ds & 36.03\% & 33.80\% & 24.13\% & 0.00\% \\
$\star$ds-big & 29.11\% & 0.00\% & 40.75\% & 0.00\%   \\
$\star$ivu06-big & 5.31\% & 3.83\% & 2.25\% & 0.00\%   \\
$\star$ivu59 & 15.75\% & 11.39\% & 16.01\% & 0.00\%   \\ \hline
avg. (all) & 3.76\%(40/42) & 2.60\%(41/42) & 3.35\%(42/42) & 0.81\%(41/42) \\
avg. (with stars) & 10.37\%(12/13) & 6.61\%(12/13) & 9.48\%(13/13) & 1.43\%(12/13) \\ \hline
\end{tabular}
}
}
\end{table}

Tables~\ref{tb:scp-result2} and \ref{tb:spp-result2} show the computational efficiency of variations of our algorithm with respect to the number of calls to 4-FNLS (and 2-FNLS in the fourth column ``2-FNLS''), where the bottom rows show average factors normalized so that that of our algorithm is set to one.

\begin{table}[t]
\caption{The number of calls to 4-FNLS of variations of the proposed algorithm for SCP instances\label{tb:scp-result2}}
\centering
\smallskip
{\tabcolsep=0.5em
{\tiny
\begin{tabular}{lrrrr} \hline
instance & \multicolumn{1}{c}{no-list} & \multicolumn{1}{c}{no-inc} & \multicolumn{1}{c}{2-FNLS} & \multicolumn{1}{c}{proposed} \\ \hline
$\star$G.1--5 (5) & 1528.2 & 1094.2 & 5755.8 & 3001.4\\
$\star$H.1--5 (5) & 1040.8 & 434.4 & 2191.6 & 1248.2 \\
$\star$I.1--5 (5) & 845.2 & 750.6 & 5072.0 & 2189.8 \\
$\star$J.1--5 (5) & 494.0 & 454.6 & 2580.4 & 1057.8 \\
$\star$K.1--5 (5) & 364.0 & 444.8 & 2828.8 & 1445.8 \\
$\star$L.1--5 (5) & 262.4 & 273.8 & 1862.6 & 809.0 \\
$\star$M.1--5 (5) & 103.0 & 117.2 & 1412.8 & 535.0 \\
$\star$N.1--5 (5) & 78.8 & 66.0 & 929.4 & 290.8 \\
RAIL507 & 4498 & 6482 & 38195 & 23692 \\
RAIL516 & 2004 & 3345 & 16915 & 12350 \\
RAIL582 & 3079 & 4037 & 22123 & 15140 \\
RAIL2536 & 429 & 359 & 4045 & 3206 \\
$\star$RAIL2586 & 416 & 399 & 5398 & 3028 \\
$\star$RAIL4284 & 182 & 195 & 2665 & 1683 \\
$\star$RAIL4872 & 197 & 229 & 4214 & 2374 \\ \hline
avg. factor & 0.37 & 0.30 & 2.21 & 1.00  \\ \hline
\end{tabular}
}
}
\end{table}

\begin{table}[t]
\caption{The number of calls to 4-FNLS of variations of the proposed algorithm for SPP instances\label{tb:spp-result2}}
\centering
\smallskip
{\tabcolsep=0.5em
{\tiny
\begin{tabular}{lrrrr} \hline
instance & \multicolumn{1}{c}{no-list} & \multicolumn{1}{c}{no-inc} & \multicolumn{1}{c}{2-FNLS} & \multicolumn{1}{c}{proposed} \\ \hline
aa01--06 (6) & 10152.2 & 20544.8 & 93694.3 & 56584.8 \\
us01--04 (4) & 35418.5 & 36491.75 & 92397.5 & 86031.0 \\
v0415--0421 (7) & 1516414.7 & 1783742.3 & 4274501.6 & 2749825.9 \\
v1616--1622 (7) & 1197.7 & 4029.1 & 17626.9 & 10048.7 \\
t0415--0421 (7) & 9412.6 & 19885.4 & 145916.4 & 53751.7 \\
$\star$t1716--1722 (7) & 32689.3 & 60992.9 & 281283.4 & 174405.1 \\
$\star$ds & 3109 & 3463 & 14142 & 11414 \\
$\star$ds-big & 852 & 1019 & 4370 & 3294 \\
$\star$ivu06-big & 236 & 305 & 1293 & 1060 \\
$\star$ivu59 & 141 & 208 & 620 & 438 \\ \hline
avg. factor & 0.20 & 0.38 & 1.70 & 1.00 \\ \hline
\end{tabular}
}
}
\end{table}

From these results, we observe that the proposed method improves the computational efficiency of the local search algorithm for the SCP and SPP instances in comparison with its variations, and our algorithm attains good performance even when the size of the neighbor list is considerably small.
We also observe that the 4-flip neighborhood search substantially improves the performance of our algorithm even though there are fewer calls to 4-FNLS compared to 2-FNLS.

Tables~\ref{tb:scp-result3} and \ref{tb:spp-result3} show the ratio $\frac{\#\textnormal{generated rows}}{|N|} \times 100$ (\%) of generated rows in the neighbor list.
We observe that our algorithm achieves good performance while generating only a small part of the neighbor list for the large-scale instances.

\begin{table}[t]
\caption{The ratio of generated rows in the neighbor list for SCP instances\label{tb:scp-result3}}
\centering
\smallskip
{\tabcolsep=0.5em
{\tiny
\begin{tabular}{lrrrrr} \hline
instance & \multicolumn{1}{c}{1 min} & \multicolumn{1}{c}{10 min} & \multicolumn{1}{c}{20 min} & \multicolumn{1}{c}{30 min} & \multicolumn{1}{c}{1 h}\\ \hline
$\star$G.1--5 (5) & 3.58\% & 3.95\% \\
$\star$H.1--5 (5) & 2.12\% & 2.43\% \\
$\star$I.1--5 (5) & 1.39\% & 1.63\% & 1.71\% \\
$\star$J.1--5 (5) & 0.82\% & 1.10\% & 1.16\% \\
$\star$K.1--5 (5) & 1.20\% & 1.49\% & 1.57\% & 1.61\% \\
$\star$L.1--5 (5) & 0.57\% & 0.98\% & 1.06\% & 1.10\%\\
$\star$M.1--5 (5) & 0.20\% & 0.58\% & 0.73\% & 0.81\% & 0.93\% \\
$\star$N.1--5 (5) & 0.01\% & 0.21\% & 0.29\% & 0.35\% & 0.49\% \\
RAIL507 & 13.41\% & 22.02\% \\
RAIL516 & 4.73\% & 8.94\% \\
RAIL582 & 8.19\% & 10.66\% \\
RAIL2536 & 0.30\% & 1.36\% & 1.87\% & 2.17\% & 2.73\% \\
$\star$RAIL2586 & 0.44\% & 1.67\% & 2.19\% & 2.54\% & 3.28\%  \\
$\star$RAIL4284 & 0.20\% & 1.01\% & 1.43\% & 1.73\% & 2.29\%  \\
$\star$RAIL4872 & 0.37\% & 1.56\% & 2.11\% & 2.46\% & 3.10\%  \\ \hline
\end{tabular}
}
}
\end{table}

\begin{table}[t]
\caption{The ratio of generated rows in the neighbor list for SPP instances\label{tb:spp-result3}}
\centering
\smallskip
{\tabcolsep=0.5em
{\tiny
\begin{tabular}{lrrrr} \hline
instance & \multicolumn{1}{c}{1 min} & \multicolumn{1}{c}{10 min} & \multicolumn{1}{c}{30 min} & \multicolumn{1}{c}{1 h}\\ \hline
aa01--06 (6) & 40.47\% & 49.17\% \\
us01--04 (4) & 3.93\% & 5.17\% \\
v0415--0421 (7) & 31.44\% & 31.64\% \\
v1616--1622 (7) & 6.55\% & 7.42\% \\
t0415--0421 (7) & 83.47\% & 90.00\% \\
$\star$t1716--1722 (7) & 61.00\% & 94.38\% & 97.12\% & 97.98\% \\
$\star$ds & 2.29\% & 12.63\% & 27.11\% & 40.05\% \\
$\star$ds-big & 0.21\% & 2.06\% & 4.96\% & 8.10\% \\
$\star$ivu06-big & 0.01\% & 0.07\% & 0.23\% & 0.45\% \\
$\star$ivu59 & 0.01\% & 0.05\% & 0.11\% & 0.16\% \\ \hline
\end{tabular}
}
}
\end{table}

We compare the result of our algorithm with those of general purpose solvers, i.e., the latest MIP solvers called \citet{CPLEX12.6}, \citet{Gurobi5.6.3} and SCIP3.1 \citep{AchterbergT2009} and a local search solver for BIP (including nonlinear constraints and objective functions) called LocalSolver3.1 \citep{BenoistT2011}.
LocalSolver3.1 is a simulated annealing based on ejection chain moves specialized for maintaining the feasibility of Boolean constraints and an efficient incremental evaluation using a directed acyclic graph.
LocalSolver3.1 is not the latest version, but it performs better than the latest version (LocalSolver4.5) for the SCP and SPP instances.

We note the following issues in comparing the performance of our algorithm with that of the latest MIP solvers.
Due to the lack of pruning mechanism, it is inherently difficult to find optimal solutions by local search algorithms even for instances having a small gap between the lower and upper bounds of the optimal values, while the latest MIP solvers find optimal solutions quickly by the branch-and-cut procedure.
Of course, local search algorithms often obtain good upper bounds close to the optimal values for the instances.
On the other hand, the latest MIP solvers often prefer running primal heuristics rather than the branch-and-cut procedure for instances having a large gap between the lower and upper bounds of the optimal values.
Indeed, the latest MIP solvers include dozens of primal heuristics and spend much computation time on finding good feasible solutions \citep{LodiA2013}, e.g., SCIP3.1 reported that it spent 1644.73~s out of 3600~s running primal heuristics for solving the ``ds'' instance, while it spent 837.47~s for solving LP relaxation problems.

We also compare our algorithm with a 3-flip local search algorithm specially tailored for SCP developed by \citet{YagiuraM2006} (denoted by YKI).
In order to handle large-scale SCP instances, many heuristic algorithms have introduced the pricing method that reduces the number of variables to be considered by using LP and/or Lagrangian relaxation.
YKI introduced a pricing method based on Lagrangian relaxation that substantially reduces the number of variables to be considered to 1.05\% from the original SCP instances on average.
Hence, in addition to the original SCP instances, we also tested all algorithms for reduced SCP instances by another pricing method based on LP relaxation \citep{UmetaniS2007}.
Table~\ref{tb:scp-reduce-instance} summarize the reduced SCP instances, where we applied preprocessing to the reduced SCP instances as well as the original SCP instances.
The eighth column ``\#free var.'' shows the number of variables to be considered in YKI for the original SCP instances.
We note that it turned out that many of reduced SPP instances by the pricing method \citep{UmetaniS2007} were infeasible because the equality constraints of SPP often prevent solutions from containing highly evaluated variables together and make the pricing method less effective.

\begin{table}[t]
\caption{Reduced benchmark instances for SCP\label{tb:scp-reduce-instance}}
\centering
\smallskip
{\tabcolsep=0.5em
{\tiny
\begin{tabular}{lrrrrcrrcrr} \hline
& & & \multicolumn{2}{c}{reduced} & & \multicolumn{2}{c}{presolved} & & \multicolumn{1}{c}{Yagiura et~al.} \\ \cline{4-5} \cline{7-8} \cline{10-10}
instance & \multicolumn{1}{c}{$z_{\scalebox{0.5}{LP}}$} & \multicolumn{1}{c}{$z_{\scalebox{0.5}{best}}$} & \multicolumn{1}{c}{\#cst.} & \multicolumn{1}{c}{\#var.} & & \multicolumn{1}{c}{\#cst.} & \multicolumn{1}{c}{\#var.} & & \multicolumn{1}{c}{\#free var.} & \multicolumn{1}{c}{time limit} \\ \hline
$\star$G.1--5 (5) & 149.48 & 166.4 & 1000.0 & 441.8 & & 1000.0 & 441.8 & & 339.0 & 600~s \\
$\star$H.1--5 (5) & 45.67 & 59.8 & 1000.0 & 236.2 & & 1000.0 & 236.2 & & 174.6 & 600~s \\
$\star$I.1--5 (5) & 138.97 & 158.4 & 1000.0 & 721.6 & & 1000.0 & 721.6 & & 479.4 & 1200~s \\
$\star$J.1--5 (5) & 104.78 & 129.4 & 1000.0 & 703.6 & & 1000.0 & 703.6 & & 429.6 & 1200~s \\
$\star$K.1--5 (5) & 276.67 & 313.8 & 2000.0 & 1434.2 & & 2000.0 & 1434.2 & & 959.4 & 1800~s \\
$\star$L.1--5 (5) & 209.34 & 259.0 & 2000.0 & 1421.0 & & 2000.0 & 1421.0 & & 856.4 & 1800~s \\
$\star$M.1--5 (5) & 415.78 & 551.6 & 5000.0 & 3245.2 & & 5000.0 & 3245.2 & & 1836.4 & 3600~s \\
$\star$N.1--5 (5) & 348.93 & 505.0 & 5000.0 & 4471.4 & & 5000.0 & 4471.4 & & 1660.6 & 3600~s \\
RAIL507 & 172.15 & $\ast$174 & 507 & 2649 & & 402 & 1019 & & 394 & 600~s \\
RAIL516 & 183.00 & $\ast$183 & 516 & 3788 & & 350 & 3088 & & 492 & 600~s \\
RAIL582 & 209.71 & $\ast$211 & 582 & 2091 & & 491 & 1493 & & 513 & 600~s \\
RAIL2536 & 688.68 & $\ast$691 & 2536 & 13746 & & 1391 & 5782 & & 1598 & 3600~s \\
$\star$RAIL2586 & 935.92 & 948 & 2586 & 13349 & & 2083 & 7377 & & 2089 & 3600~s \\
$\star$RAIL4284 & 1054.05 & 1066 & 4284 & 21728 & & 3189 & 14565 & & 2639 & 3600~s \\
$\star$RAIL4872 & 1510.87 & 1532 & 4872 & 21329 & & 3577 & 11404 & & 3650 & 3600~s \\ \hline
\end{tabular}
}
}
\end{table}

Tables~\ref{tb:scp-result4} and \ref{tb:scp-reduce-result4} show the relative gap of the best feasible solutions for the original formulation (\ref{eq:bip}) achieved by the algorithms for the original and reduced SCP instances, respectively.
We note that all algorithms found feasible solutions for all SCP instances.
All algorithms were tested on a MacBook Pro laptop computer with a 2.7~GHz Intel Core i7 processor and 16~GB memory, and were run on a single thread under MacOS10.12 operating system with time limits as shown in Tables~\ref{tb:scp-instance} and \ref{tb:scp-reduce-instance}.
That is, we tested all algorithms under the same amount of available computational resources for fair comparison.

\begin{table}[t]
\caption{Computational results of the latest solvers and the proposed algorithm for SCP instances\label{tb:scp-result4}}
\centering
\smallskip
{\tabcolsep=0.5em
{\tiny
\begin{tabular}{lrrrrrr} \hline
instance & \multicolumn{1}{c}{CPLEX12.6} & \multicolumn{1}{c}{Gurobi5.6.3} & \multicolumn{1}{c}{SCIP3.1} & \multicolumn{1}{c}{LocalSolver3.1} & \multicolumn{1}{c}{Yagiura et~al.} & \multicolumn{1}{c}{proposed} \\ \hline
$\star$G.1--5 (5) & 0.37\% & 0.49\% & 0.24\% & 45.80\% & 0.00\% & 0.00\% \\
$\star$H.1--5 (5) & 1.92\% & 2.28\% & 1.93\% & 61.54\% & 0.00\% & 0.00\% \\
$\star$I.1--5 (5) & 2.81\% & 2.72\% & 1.85\% & 41.38\% & 0.00\% & 0.50\% \\
$\star$J.1--5 (5) & 8.37\% & 4.30\% & 3.59\% & 58.40\% & 0.00\% & 1.53\% \\
$\star$K.1--5 (5) & 4.77\% & 4.38\% & 2.55\% & 51.22\% & 0.00\% & 1.26\% \\
$\star$L.1--5 (5) & 9.57\% & 8.44\% & 3.52\% & 57.79\% & 0.00\% & 2.05\% \\
$\star$M.1--5 (5) & 18.43\% & 10.10\% & 30.71\% & 71.08\% & 0.00\% & 2.65\% \\
$\star$N.1--5 (5) & 33.13\% & 12.49\% & 42.32\% & 75.63\% & 0.00\% & 5.47\% \\
RAIL507 & 0.00\% & 0.00\% & 0.00\% & 5.43\% & 0.00\% & 0.00\% \\
RAIL516 & 0.00\% & 0.00\% & 0.00\% & 3.19\% & 0.00\% & 0.00\% \\
RAIL582 & 0.00\% & 0.00\% & 0.00\% & 5.80\% & 0.00\% & 0.00\% \\
RAIL2536 & 0.00\% & 0.00\% & 0.86\% & 3.50\% & 0.29\% & 0.72\% \\
$\star$RAIL2586 & 2.27\% & 2.17\% & 2.27\% & 5.39\% & 0.00\% & 1.56\% \\
$\star$RAIL4284 & 5.34\% & 1.57\% & 30.55\% & 6.50\% & 0.00\% & 2.12\% \\
$\star$RAIL4872 & 1.73\% & 1.73\% & 2.67\% & 5.61\% & 0.00\% & 1.80\% \\ \hline
avg. (all) & 8.64\% & 4.92\% & 10.00\% & 49.99\% & 0.01\% & 1.56\% \\
avg. (with stars) & 9.45\% & 5.38\% & 10.91\% & 54.22\% & 0.00\% & 1.69\% \\ \hline
\end{tabular}
}
}
\end{table}

\begin{table}[t]
\caption{Computational results of the latest solvers and the proposed algorithm for reduced SCP instances\label{tb:scp-reduce-result4}}
\centering
\smallskip
{\tabcolsep=0.5em
{\tiny
\begin{tabular}{lrrrrrr} \hline
instance & \multicolumn{1}{c}{CPLEX12.6} & \multicolumn{1}{c}{Gurobi5.6.3} & \multicolumn{1}{c}{SCIP3.1} & \multicolumn{1}{c}{LocalSolver3.1} & \multicolumn{1}{c}{Yagiura et~al.} & \multicolumn{1}{c}{proposed} \\ \hline
$\star$G.1--5 (5) & 0.60\% & 0.24\% & 0.47\% & 3.78\% & 0.00\% & 0.00\% \\
$\star$H.1--5 (5) & 1.92\% & 1.62\% & 1.62\% & 1.90\% & 0.00\% & 0.00\% \\
$\star$I.1--5 (5) & 2.59\% & 1.64\% & 2.10\% & 1.74\% & 0.00\% & 0.00\% \\
$\star$J.1--5 (5) & 4.43\% & 3.99\% & 3.42\% & 2.99\% & 0.15\% & 0.31\% \\
$\star$K.1--5 (5) & 2.84\% & 2.48\% & 2.66\% & 2.18\% & 0.00\% & 0.63\% \\
$\star$L.1--5 (5) & 4.77\% & 4.85\% & 3.13\% & 2.41\% & 0.00\% & 0.77\% \\
$\star$M.1--5 (5) & 10.82\% & 4.56\% & 31.15\% & 4.89\% & 0.00\% & 1.01\%\\
$\star$N.1--5 (5) & 15.95\% & 11.36\% & 34.17\% & 6.47\% & 0.00\% & 1.06\% \\
RAIL507 & 0.00\% & 0.00\% & 0.00\% & 0.57\% & 0.00\% & 0.00\% \\
RAIL516 & 0.00\% & 0.00\% & 0.00\% & 1.61\% & 0.00\% & 0.00\% \\
RAIL582 & 0.00\% & 0.00\% & 0.00\% & 0.47\% & 0.00\% & 0.00\% \\
RAIL2536 & 0.00\% & 0.00\% & 0.00\% & 0.29\% & 0.00\% & 0.00\% \\
$\star$RAIL2586 & 0.84\% & 0.52\% & 0.73\% & 1.66\% & 0.00\% & 0.84\% \\
$\star$RAIL4284 & 0.47\% & 0.93\% & 0.74\% & 1.66\% & 0.00\% & 0.84\% \\
$\star$RAIL4872 & 0.84\% & 1.23\% & 1.10\% & 1.73\% & 0.00\% & 0.84\% \\ \hline
avg. (all) & 4.72\% & 3.33\% & 8.43\% & 2.98\% & 0.02\% & 0.45\% \\
avg. (with stars) & 5.16\% & 3.64\% & 9.21\% & 3.18\% & 0.02\% & 0.50\% \\ \hline
\end{tabular}
}
}
\end{table}

We observe that our algorithm achieves better upper bounds than general purpose solvers (CPLEX12.6.3, Gurobi5.6.3, SCIP3.1 and LocalSolver3.1) for the original and reduced SCP instances.
We also observe that our algorithm achieves good upper bounds close to those of YKI for the reduced SCP instances.
It shows that our algorithm attains good performance close to that of the specially tailored algorithm (YKI) for SCP by combining a pricing method.
In view of these, our algorithm achieves sufficiently good upper bounds compared to the other algorithms for the SCP instances.

Table~\ref{tb:spp-result4} shows the relative gap of the best feasible solutions achieved by the algorithms for the SPP instances.
All algorithms were tested on a MacBook Pro laptop computer with a 2.7~GHz Intel Core i7 processor and 16~GB memory, and were run on a single thread under MacOS~10.12 operating system with time limits as shown in Table~\ref{tb:spp-instance}.
In Table~\ref{tb:spp-result4}, the number in parentheses shows that of instances for which the algorithm obtained at least one feasible solution within the time limit, and the relative gap shows the average value for them.

We also compare the computational results of our algorithm with those of a Lagrangian heuristic algorithm for BIP called the Wedelin's heuristic \citep{BastertO2010,WedelinD1995} and a branch-and-cut algorithm specially tailored for SPP developed by \citet{BorndorferR1998}.
The computational results for the Wedelin's heuristic are taken from \citep{BastertO2010}, where it was tested on a 1.3~GHz Sun Ultra Sparc-IIIi and run with a time limit of 600~s.
The computational results for the branch-and-cut algorithm are taken from \citep{BorndorferR1998}, where it was tested on a Sun Ultra Sparc 1 Model 170E and run with a time limit of 7200~s.

\begin{table}[t]
\caption{Computational results of the latest solvers and the proposed algorithm for SPP instances\label{tb:spp-result4}}
\centering
\smallskip
{\tabcolsep=0.5em
{\tiny
\begin{tabular}{lrrrrrrr} \hline
instance & \multicolumn{1}{c}{CPLEX12.6} & \multicolumn{1}{c}{Gurobi5.6.3} & \multicolumn{1}{c}{SCIP3.1} & \multicolumn{1}{c}{LocalSolver3.1} & \multicolumn{1}{c}{Bastert et~al.$\dagger$} & \multicolumn{1}{c}{Bornd{\"o}rfer$\ddagger$} & \multicolumn{1}{c}{proposed} \\ \hline
aa01--06 (6) & 0.00\%(6) & 0.00\%(6) & 0.00\%(6) & 13.89\%(1) & --- & 0.00\%(6) & 1.60\%(6) \\
us01--04 (4) & 0.00\%(4) & 0.00\%(4) & 0.00\%(3) & 11.26\%(2) & --- & 0.00\%(4) & 0.04\%(4) \\
v0415--0421 (7) & 0.00\%(7) & 0.00\%(7) & 0.00\%(7) & 0.05\%(7) & 0.71\%(6) & 0.00\%(7) & 0.00\%(7) \\
v1616--1622 (7) & 0.00\%(7) & 0.00\%(7) & 0.00\%(7) & 4.60\%(7) & 6.64\%(3) & 0.01\%(7) & 0.09\%(7) \\
t0415--0421 (7) & 0.66\%(7) & 0.60\%(7) & 1.61\%(6) & --- (0) & 1.30\%(5) & 1.83\%(7) & 0.92\%(6) \\
$\star$t1716--1722 (7) & 8.34\%(7) & 16.58\%(7) & 3.51\%(7) & 37.08\%(1) & 12.55\%(7) & 1.63\%(7) & 2.45\%(7) \\
$\star$ds & 8.86\% & 55.61\% & 40.53\% & 85.17\% & 8.82\% & --- & 0.00\% \\
$\star$ds-big & 62.16\% & 24.03\% & 72.01\% & 92.69\% & --- & --- & 0.00\% \\
$\star$ivu06-big & 20.86\% & 0.68\% & 17.90\% & 52.54\% & --- & --- & 0.00\% \\
$\star$ivu59 & 28.50\% & 4.36\% & 37.84\% & 48.95\% & --- & --- & 0.00\% \\ \hline
avg. (all) & 4.37\%(42/42) & 4.88\%(42/42) & 5.06\%(40/42) & 17.52\%(22/42) & 5.79\%(22/29) & 0.64\%(38/38) & 0.81\%(41/42) \\
avg. (with stars) & 14.10\%(13/13) & 15.66\%(13/13) & 15.07\%(13/13) & 63.29\%(5/13) & 10.79\%(9/10) & 1.77\%(9/9) & 1.43\%(12/13) \\ \hline
\multicolumn{8}{l}{$\dagger$ 600~s on a 1.3~GHz Sun Ultra Sparc-IIIi}\\
\multicolumn{8}{l}{$\ddagger$  7200~s on a Sun Ultra Sparc 1 Model 170E}
\end{tabular}
}
}
\end{table}

We first observe that our algorithm achieves better upper bounds than the latest MIP solvers (CPLEX12.6, Gurobi5.6.3 and SCIP3.1) for hard SPP instances and good upper bounds close to the optimal values for the others.
We next observe that our algorithm achieves better upper bounds than the general purpose heuristic solvers (LocalSolver3.1 and the Wedelin's heuristic) and good upper bounds comparable to those of the branch-and-bound algorithm \mbox{\citep{BorndorferR1998}} for many SPP instances.
In view of these, our algorithm achieves sufficiently good upper bounds compared to the other algorithms for the SPP instances, particularly for hard SPP instances.

\section{Conclusion\label{sec:conclusion}}
We present a data mining approach for reducing the search space of local search algorithms for a class of BIPs including SCP and SPP.
In this approach, we construct a $k$-nearest neighbor graph by extracting variable associations from the instance to be solved in order to identify promising pairs of flipping variables in the 2-flip neighborhood.
We also develop a 4-flip neighborhood local search algorithm that flips four variables alternately along 4-paths or 4-cycles in the $k$-nearest neighbor graph.
We incorporate an efficient incremental evaluation of solutions and an adaptive control of penalty weights into the 4-flip neighborhood local search algorithm. 
Computational results show that the proposed method improves the performance of the local search algorithm for large-scale SCP and SPP instances.

We expect that data mining approaches could also be beneficial for efficiently solving other large-scale combinatorial optimization problems, particularly for hard instances having large gaps between the lower and upper bounds of the optimal values.

\section*{Acknowledgment}
This work was supported by the Grants-in-Aid for Scientific Research (JP26282085).

\section*{References}

\end{document}